\newcommand{\numorbitals}{N_\mathrm{b}}
\newcommand{\ctTBprefactor}{h_0}
\newcommand{\ctTBexponent}{{\gamma_0}}
\newcommand{\ctHamregularity}{\nu}
\newcommand{\ctnoninterpen}{\mathfrak{m}}
\newcommand{\ctCT}{\gamma_\mathrm{CT}}
\newcommand{\ctGamma}{{\Upsilon}}
\newcommand{\Ham}{\mathcal{H}}
\let\originalleft\left
\let\originalright\right
\renewcommand{\left}{\mathopen{}\mathclose\bgroup\originalleft}
\renewcommand{\right}{\aftergroup\egroup\originalright}
\renewcommand{\leq}{\leqslant}\renewcommand{\geq}{\geqslant}
\renewcommand{\above}[2]{\genfrac{}{}{0pt}{}{#1}{#2}}
\newcommand{\HamLin}{\Ham^\mathrm{L}}
\newcommand{\HamNonLin}{\Ham^\mathrm{NL}}
\newtheorem*{fact*}{Fact}
\newtheorem{theorem}{Theorem}[section]
\newtheorem*{theorem*}{Theorem}
\newtheorem{lemma}[theorem]{Lemma}
\theoremstyle{remark}
\newtheorem{remark}{Remark}
\numberwithin{equation}{section}
\newtheoremstyle{assumptionStyle}
  {0.5em}
  {0.5em}
  {}            
  {0.5cm}       
  {\bfseries}   
  {}            
  {0.2cm}       
  {}            
\newcommand{\thistheoremname}{}
\theoremstyle{assumptionStyle}
\newtheorem*{genericthm*}{\textup{\thistheoremname}}
\newenvironment{assumption}[2][1.6cm]{
	\renewcommand{\thistheoremname}{#2}%
	\begin{genericthm*}\hangindent=#1
	\setlength{\parindent}{#1+1.5em}} 
	{\end{genericthm*}}
\newcommand{\asNonInter}{\hyperlink{asNonInter}
        {\textcolor{black}{\textup{\textbf{(L)}}}}}
\newcommand{\asTB}{\hyperlink{asTB}
        {\textcolor{black}{\textup{\textbf{(TB)}}}}}
\newcommand{\asSC}{\hyperlink{asSC}
        {\textcolor{black}{\textup{\textbf{(SC)}}}}}
\newcommand{\asStab}{\hyperlink{asStab}
        {\textcolor{black}{\textup{\textbf{(STAB)}}}}}
\newcommand{\asREF}{\hyperlink{asREF}
        {\textcolor{black}{\textup{\textbf{(REF)}}}}}
\newcommand{\asGAP}{\hyperlink{asGAP}
        {\textcolor{black}{\textup{\textbf{(GAP)}}}}}
\newcommand{\asSTABref}{\hyperlink{asSTABref}
        {\textcolor{black}{\textup{\textbf{(STAB}}$^\mathrm{\bm{ref}}$\textbf{\textup{)}}}}}
\newcommand{\asP}{\hyperlink{asP}
        {\textcolor{black}{\textup{\textbf{(P)}}}}}
\newcommand{\asFF}{\hyperlink{asFF}
        {\textcolor{black}{\textup{\textbf{(FF)}}}}}
\newcommand{\MARGINWIDTH}{28.5mm}
\newcommand{\refCE}[3]{$(\hyperref[#1]{\mathrm{CE}^{{#2},{#3}}})$}
\newcommand{\refGCE}[4]{$(\hyperref[#1]{\mathrm{GCE}_{{#4}}^{{#2},{#3}}})$}
\begin{document}

\title
    [Locality in Self-Consistent Tight Binding Models]
    {Locality of Interatomic Interactions in\\Self-Consistent Tight Binding Models}

\author{Jack Thomas}

\begin{abstract}
    A key starting assumption in many classical interatomic potential models for materials is a site energy decomposition of the potential energy surface into contributions that only depend on a small neighbourhood.
    Under a natural stability condition, we construct such a spatial decomposition for self-consistent tight binding models, extending recent results for linear tight binding models to the non-linear setting.
    %
    %
    %
    %
    %
    %
\end{abstract}

\maketitle

\section{Introduction}
Electronic structure models are widely used to calculate many optical, magnetic and mechanical properties of materials and molecules \cite{bk:finnis,bk:martin04}. Self-consistent (non-linear) tight binding models are simple electronic structure models that are interesting in their own right but also provide convenient prototypes for the much more complicated density functional theories. A paradigm example being the density functional tight binding (DFTB) method \cite{KoskinenVille2009dftb,SeifertJoswig2012dftb,ElstnerSeifert2014dftb}. 
\let\thefootnote\relax\footnote{
    \textit{Date:}~\today.\\
    \textit{2020 Mathematics Subject Classification:}~74E15, 81V45. \\
    \textit{Key words and phrases:}~strong locality; self-consistent tight binding; point defects; insulators.\\
    This work is supported by EPSRC as part of the MASDOC DTC, Grant No.~EP/HO23364/1.\\
    \href{mailto:j.thomas.1@warwick.ac.uk?subject=Non-linear Tight Binding}
    {\textcolor{black}{\texttt{J.Thomas.1@warwick.ac.uk.}}}~Mathematics Institute, Zeeman Building, University of Warwick, Coventry, UK.
}

In contrast to previous works on the linear tight binding model \cite{ChenLuOrtner18,ChenOrtner16,ChenOrtnerThomas2019:locality}, the self-consistency introduces the interesting issue of stability of the electronic structure problem. Therefore, under a suitable stability condition \cite{ELu10}, we show that the potential energy surface in this model can be decomposed into exponentially localised site contributions, thus justifying many classical interatomic potential (IP) and multi-scale methods. 

Despite the relative simplicity of tight binding models, a naive implementation demands $O(N^3)$ computational cost, where $N$ is the number of particles in the system.
It may therefore be advantageous to instead implement an IP model. In this case, the relevant parameters can be fitted to high accuracy by machine learning techniques together with theoretical data resulting from a high-fidelity model
\cite{Bartok2018, 
BartokCsanyiEtAl2010,
BehlerParrinello2007,
Shapeev2016}.
In most IP models for materials, a necessary starting assumption is that the potential energy surface can be decomposed into localised site contributions. That is, 
for atomic positions $y = \{y_n\}$, the total energy $E = E(y)$ may be written
\begin{equation}\label{eq:local_intro}
    E(y) = \sum_\ell E_\ell(y) 
    \qquad \text{with} \qquad 
    \left| \frac{\partial^j E_\ell(y)}{\partial y_{n_1}\dots\partial y_{n_j}}\right| 
        \lesssim e^{-\eta \sum_{l=1}^j |y_\ell - y_{n_l}|},
\end{equation}
for some $\eta > 0$. 

Typically, classical IP models are short-ranged and thus only justified if the exponent $\eta$ in \cref{eq:local_intro}, which measures the interatomic interaction range, is not too small. Also, in the context of QM/MM multi-scale methods, $\eta$ gives a guide for the size of the (computationally more expensive) QM region that must be imposed 
\cite{ChenOrtner16,
ChenOrtner17,
CsanyiAlbaretMorasPayneDeVita05}.
Therefore, in the present paper, not only are we interested in obtaining a site energy decomposition as in \cref{eq:local_intro}, but we also wish to describe the exponent.

In addition to the partial justification for IP and multi-scale models, our results also allow the thermodynamic and zero Fermi-temperature limit results of \cite{OrtnerThomas2020:pointdef,ChenLuOrtner18} to be extended to the non-linear setting. We sketch the main ideas in the concluding remarks of \S\ref{sec:conclusions}.

\subsection{Summary of Results}
The results of this paper are divided into two sections: we discuss general locality estimates in \S\ref{sec:results_general} and improve these results for the specific case of point defects in insulating multi-lattice materials in \S\ref{section:point_defects}. 

\subsubsection*{General Locality Estimates} 
The previous works \cite{ChenLuOrtner18,ChenOrtner16} provide partial justification for \cref{eq:local_intro} in the setting of linear tight binding models at finite Fermi-temperature, $T$. However, in general, we can only expect
\[
    \eta\sim T \quad \text{as} \quad T\to0
\]
in this case, meaning that for low temperature regimes the practical value of \cref{eq:local_intro} is limited. However, in the case of insulating multi-lattice materials (where there is a spectral gap in the system), the locality estimates are improved \cite{ChenOrtnerThomas2019:locality}, and extended to the zero Fermi-temperature case. In this setting, the exponent $\eta$ is linear in the spectral gap. 

In \S\ref{sec:results_general}, we simultaneously extend both \cite{ChenOrtner16,ChenLuOrtner18} (in the case of finite Fermi-temperature) and \cite{ChenOrtnerThomas2019:locality} (for insulators at finite or zero Fermi-temperature) to the non-linear setting.

\subsubsection*{Point Defects in Insulators}
Simulating local defects in materials remains an issue of great interest in the solid state physics and materials science communities \cite{Pisani1994,bk:Stoneham2001}. See \cite{CancesLeBris2013} for a mathematical review of some works related to the modelling of point defects in materials science. 

When considering point defects in the material, ``pollution'' of the spectrum may enter band gap which \textit{a priori} affects the exponent $\eta$ in \cref{eq:local_intro}. However, by approximating the defect as a low rank perturbation, we show in \S\ref{section:point_defects} that the locality results only weakly depend on the defect and the estimates resemble the defect-free estimates for sites away from the defect core. That is, the exponents in the locality estimates depend only on a local environment of the particular atomic site. This extends results of \cite{ChenOrtnerThomas2019:locality} to the non-linear setting.

\subsection{Notation}
For an operator $T$, the discrete spectrum (isolated eigenvalues of finite multiplicity) and essential spectrum are denoted by $\sigma_\mathrm{disc}(T)$ and $\sigma_\mathrm{ess}(T) \coloneqq \sigma(T) \setminus \sigma_\mathrm{disc}(T)$, respectively. 

For sequences $\psi$, the $j^\mathrm{th}$ entry is written $[\psi]_j$ and the $\ell^2$ norm is $\|\psi\|_{\ell^2}$. For bounded $\mathbb C$-valued functions, we denote by $\|\cdot\|_\infty$ the supremum norm. For matrices (or operators with matrix entries only non-zero on a finite sub-matrix) $M$, we denote by $\|M\|_\mathrm{max}\coloneqq \max_{ij} |M_{ij}|$ the maximum-norm. For operators $T$ on $\ell^2$, $\|\cdot\|_\mathrm{F}$ denotes the Frobenius or Hilbert-Schmidt norm and $\|\cdot\|_{\ell^2 \to \ell^2}$ the operator norm.

On $\mathbb R^n$ or $\mathbb C$, we will denote the Euclidean norm by $|\cdot|$ and the open balls of radius $\delta$ about $a$ and $0$ by $B_\delta(a)$ and $B_\delta \coloneqq B_\delta(0)$, respectively. For a subset $A$ of $\mathbb R^n$ or $\mathbb C$, we write 
$B_\delta(A) \coloneqq 
    \{ x \colon 
    \mathrm{dist}(x,A) < \delta \}$
where $\mathrm{dist}(b,A) \coloneqq \inf_{a\in A}|a - b|$. The Hausdorff distance between two sets $A$ and $B$ is denoted $\mathrm{dist}(A,B) \coloneqq \max\{ \sup_{a \in A} \mathrm{dist}(a, B), \sup_{b \in B} \mathrm{dist}(b, A) \}$.

We write $b + A = \{ b + a \colon a \in A \}$, $A - b \coloneqq \{a - b \colon a \in A\}$ and $r A \coloneqq \{r a \colon a \in A\}$. 
If $A$ is finite then $\#A$ denotes the cardinality of $A$. For an index set $A$, we denote by $\delta_{ij}$ the Kronecker delta function for $i,j \in A$. The set of non-negative real numbers will be denoted by $\mathbb R_+$.

The symbol $C$ will denote a generic positive constant that may change from one line to the next. In calculations, $C$ will always be independent of Fermi-temperature. The dependencies of $C$ will be clear from context or stated explicitly. When convenient to do so we write $f \lesssim g$ to mean $f \leq C g$ for some generic positive constant as above.

\section{Results: General Locality Estimates}
\label{sec:results_general}
\subsection{Tight Binding Model}            
\label{subsec:tight_binding_model}          

For a locally finite reference configuration $\Lambda \subset \mathbb R^d$ and displacement $u \colon \Lambda\to\mathbb R^d$, we write 
$\bm{r}_{\ell k}(u) \coloneqq \ell + u(\ell) - k - u(k)$
and 
$r_{\ell k}(u) \coloneqq |\bm{r}_{\ell k}(u)|$. 
We consider displacements $u$ satisfying the following uniform non-interpenetration condition:
\hypertarget{asNonInter}{\label{asNonInter}}
\begin{assumption}{\asNonInter}
There exists $\ctnoninterpen>0$ such that
$r_{\ell k}(u) \geq \ctnoninterpen|\ell - k|$ for all $\ell, k \in \Lambda$.
\end{assumption}

We consider $\numorbitals$ atomic orbitals per atom, indexed by $1\leq a,b \leq \numorbitals$. For given displacements $u$, we consider corresponding electronic densities, $\rho\colon \Lambda \to \mathbb R_+$, satisfying a self-consistency condition (see \asSC, below), which introduces the non-linearity to the model; the main departure from the previous works \cite{ChenOrtner16,ChenOrtnerThomas2019:locality}. 
We define the two-centre tight binding Hamiltonian as follows:\hypertarget{asTB}{\label{asTB}}
\begin{assumption}{\asTB}
    For $\ell, k \in \Lambda$ and $1\leq a,b \leq\numorbitals$, we suppose that the entries of the Hamiltonian take the form
    \begin{equation}\label{a:two-centre}
        \Ham(u;\rho)_{\ell k}^{ab} 
            \coloneqq h^{ab}_{\ell k}\left(\bm{r}_{\ell k}(u)\right) + v(\rho(\ell)) \delta_{\ell k} \delta_{ab}
    \end{equation}
    where $h^{ab}_{\ell k}\colon \mathbb R^d \to \mathbb R$ are $\ctHamregularity$ times continuously differentiable for some $\ctHamregularity\geq1$ and $v$ is a bounded smooth function on $(0,\infty)$ with bounded derivatives. 
    
    Further, we assume that there exist 
    $\ctTBprefactor, \ctTBexponent >0$
    such that, for each $1 \leq j \leq \ctHamregularity$, 
    \begin{equation}\label{a:TB}	
        \left|h^{ab}_{\ell k}(\xi)\right| \leq \ctTBprefactor\,e^{-\ctTBexponent |\xi|}
        \quad \text{and} \quad 
        \left|\partial^\alpha h_{\ell k}^{ab}(\xi)\right| \leq \ctTBprefactor \, e^{-\ctTBexponent |\xi|} 
        \quad \forall \xi \in \mathbb R^d
    \end{equation}
    for all multi-indices $\alpha \in \mathbb N^d$ with $|\alpha|_1 = j$. 
    
    Finally, we suppose that $h^{ab}_{\ell k}(\xi) = h_{k\ell}^{ba}(-\xi)$ for all $\xi \in \mathbb R^d$, $1\leq a,b\leq\numorbitals$ and $\ell, k \in \Lambda$.
\end{assumption}
The constants $h_0$ and $\gamma_0$ in \cref{a:TB} are independent of the atomic sites. 

The symmetry assumption in \asTB~means that the Hamiltonian, $\Ham(u;\rho)$, is symmetric and thus the spectrum is real. Moreover, $\sigma(\Ham(u;\rho)) \subset [\underline{\sigma},\overline{\sigma}]$ for some $\underline{\sigma},\overline{\sigma}$ depending on $\mathfrak{m}, d, h_0, \gamma_0, \|v\|_\infty$ and are independent of the size of the system and the displacement $u$ satisfying \asNonInter~with the constant $\mathfrak{m}$. In fact, by generalising the proof of \cite[Lemma~4]{ChenOrtner16} to the setting we consider here, we obtain $\sigma(\Ham(u;\rho)) \subset \|v\|_\infty + C_d h_0 (\gamma_0 \mathfrak{m})^{-d} [-1,1]$.

The pointwise bound on $|h^{ab}_{\ell k}|$ in \cref{a:TB} is more general than many tight binding models which impose a finite cut-off radius. The assumption for $|\alpha| = 1$, states that there are no long range interactions in the model and so, in particular, we assume that Coulomb interactions have been screened, which is typical in practical tight binding models \cite{cohen94,mehl96,papaconstantopoulos97}.

Following 
\cite{ChenOrtner16,
      ChenLuOrtner18,
      OrtnerThomas2020:pointdef,
      ChenOrtnerThomas2019:locality},
we consider finite energy displacements. For $u \colon \Lambda \to \mathbb R^d$, $\ell \in \Lambda$ and $\sigma \in \Lambda - \ell$, we define the finite difference $D_\sigma u(\ell) \coloneqq u(\ell + \sigma) - u(\ell)$ and the full finite difference stencil $Du(\ell) \coloneqq (D_\sigma u(\ell))_{\sigma \in \Lambda - \ell}$. We then let $\dot{\mathscr{W}}^{1,2}(\Lambda)$ be the set of finite energy displacements: for $\ctGamma > 0$,
\[
    \dot{\mathscr{W}}^{1,2}(\Lambda) 
    \coloneqq \bigg\{ u \colon \Lambda \to \mathbb R^d \colon
        \|Du\|_{\ell^2_\ctGamma}^2 \coloneqq 
            \sum_{\ell \in \Lambda} 
            \sum_{\sigma \in \Lambda - \ell} 
            e^{-2\ctGamma |\sigma|} |D_\sigma u(\ell)|^2 < \infty 
    \bigg\}
\]
which is well defined since all the semi-norms $\|D\cdot\|_{\ell^2_\ctGamma}$ are all equivalent for $\ctGamma>0$ \cite{ChenNazarOrtner19}. For the remainder of this paper, we fix an exponent $\ctGamma>0$. In this section, we require perturbations to have finite energy (see Lemma~\ref{lem:perturbation}, below) whereas, when considering the improved estimates in the case of point defects in \S\ref{section:point_defects}, we also require the configurations to be given by finite energy displacements. 

\subsection{Spatial Decomposition of Quantities of Interest} 

Let $u \colon \Lambda \to \mathbb R^d$ satisfy \asNonInter~and suppose $\rho$ is an associated electronic density.

\subsubsection{Local Quantities of Interest}

We suppose $\mathfrak{o}\colon \mathbb R\to \mathbb R$ is a function that extends analytically to an open neighbourhood of $\sigma(\Ham(u;\rho))$ in $\mathbb C$ and define the corresponding local quantities, $O_\ell$ (for $\ell \in \Lambda$), by 
\begin{align}
\label{eq:local_integral}
    O_\ell(u;\rho) \coloneqq 
        - \frac{1}{2\pi i} \sum_a 
        \oint_{\mathscr C_\mathfrak{o}} \mathfrak{o}(z) 
            \left[ 
                ( \Ham(u;\rho) - z )^{-1}
            \right]_{\ell\ell}^{aa}
        \mathrm{d}z
\end{align}
where $\mathscr C_\mathfrak{o}$ is a simple, closed, positively oriented contour contained within the region of holomorphicity of $\mathfrak{o}$, encircling $\sigma(\Ham(u;\rho))$ and such that
\begin{align}
\label{eq:do}
    \mathsf{d}_\mathfrak{o} \coloneqq 
    \min_{z \in \mathscr C_\mathfrak{o}} 
    \mathrm{dist}\big(
            z, \sigma(\Ham(u;\rho))
        \big) > 0.
\end{align}

For finite systems, we may diagonalise the Hamiltonian, $\Ham(u;\rho) = \sum_s \lambda_s \ket{\psi_s}\bra{\psi_s}$ (where $(\lambda_s,\psi_s)$ are normalised eigenpairs), and note that many quantities of interest, including the Helmholtz free energy, grand potential and the particle number functional \cite{ChenLuOrtner18,ChenOrtner16}, may be written as a sum of the local contributions \cref{eq:local_integral} for some appropriate choice of $\mathfrak{o}$:
\begin{align}
\label{eq:total_energy}
    \sum_{\ell \in \Lambda} O_\ell(u;\rho)
    = \sum_{\above
                {\ell \in \Lambda}
                {1\leq a\leq\numorbitals}
            } 
      \sum_s \mathfrak{o}(\lambda_s) [\psi_s]_{\ell a}^2
    = \sum_s \mathfrak{o}(\lambda_s).
\end{align}
This decomposition is well known and follows from a spatial decomposition of the total density of states \cite{Ercolessi2005,ChenOrtner16,bk:finnis,ChenLuOrtner18}.

Our motivation comes from viewing \cref{eq:total_energy} as the total energy of the system and \cref{eq:local_integral} as a spatial decomposition of this energy. The aim of the present paper is to show the exponential localisation of \cref{eq:local_integral} with respect to perturbing atomic positions, which, in the case of site energies, justifies IP and multi-scale models as discussed in the introduction.

\subsubsection{Self-consistency}
After fixing an inverse Fermi-temperature $\beta >0$, we define 
$f(z - \mu) = (1 + e^{\beta(z - \mu)})^{-1}$
to be the Fermi-Dirac occupation distribution where $\mu$ is a fixed chemical potential. We let $\mathscr C_f$ be a simple closed contour encircling $\sigma(\Ham(u;\rho))$ and avoiding the singularities of $f(\,\cdot - \mu)$. That is, avoiding 
$\mu + i \pi\beta^{-1} \left( 2\mathbb Z + 1\right)$
and such that 
$\mathsf{d}_f \coloneqq 
    \min_{z \in \mathscr C_f} \mathrm{dist}
    \left(
        z, \sigma(\Ham(u;\rho))
    \right) > 0$.
%
In general, we may choose $\mathscr C_f$ so that 
$\mathsf{d}_{f} \geq \tfrac{\pi}{2\beta}$.
However, for insulators, as we shall see in \S\ref{section:point_defects}, this constant may be chosen to be linear in the spectral gap at $\mu$.

For the case of zero Fermi-temperature, we take the pointwise $\beta \to \infty$ limit and define 
$f(z-\mu) \coloneqq \chi_{(-\infty,\mu)}(z) + \frac{1}{2}\chi_{\{\mu\}}(z)$.
For insulating systems, there is a spectral gap at $\mu$ 
(that is, with $\mathsf{g} \coloneqq \inf [\mu,+\infty) \cap \sigma(\Ham(u;\rho)) - \sup (-\infty, \mu] \cap \sigma(\Ham(u;\rho)) > 0$)
and so $f(\,\cdot\, - \mu)$ is analytic in a neighbourhood of $\sigma(\Ham(u;\rho))$. In this case, we let $\mathscr C_f$ be a simple closed contour encircling 
$\sigma(\Ham(u;\rho)) \cap (-\infty, \mu)$
and avoiding 
$\sigma(\Ham(u;\rho)) \cap [\mu, \infty)$
with 
$\mathsf{d}_f = 
    \min_{z \in \mathscr C_f}\mathrm{dist}
    \left(
        z, \sigma(\Ham(u;\rho))
    \right) \geq \frac{1}{2}\mathsf{g}$.

For fixed $\beta \in (0,\infty]$ as above, we can therefore define %
\begin{align}\label{eq:Fl}
    F_\ell(u;\rho) \coloneqq O_\ell(u;\rho) 
    \quad \textup{with} \quad 
    \mathfrak{o} = f(\,\cdot - \mu) 
    \quad \textup{and} \quad
    \mathscr C_\mathfrak{o} = \mathscr C_f
\end{align}
as in \cref{eq:local_integral}. To simplify notation in the following, we will write $F(u;\rho) = (F_\ell(u;\rho))_{\ell \in \Lambda}$.

For a given displacement, we consider corresponding self-consistent electronic densities, giving rise to the non-linearity of the problem:
\hypertarget{asSC}{\label{asSC}}
\begin{assumption}[1.55cm]{\asSC}
    For a displacement $u\colon \Lambda \to \mathbb R^d$ satisfying \asNonInter, we say that $\rho$ is an associated self-consistent electronic density if $\rho = F(u;\rho)$.
\end{assumption}

\begin{remark}[Self-consistency]
    For a finite system, the self-consistency equation \asSC~takes the following form:
    \begin{align}
        \rho(\ell) = \sum_{s,a} f(\lambda_s - \mu) [\psi_s]_{\ell a}^2
    \end{align}
    where 
    $\Ham(u;\rho) =\sum_s \lambda_s \ket{\psi_s}\bra{\psi_s}$
    for normalised eigenpairs $(\lambda_s, \psi_s)$. That is, the electronic structure of the system is obtained by assigning electrons to the eigenstates of lowest energy, according to the Fermi-Dirac occupation distribution and subject to Pauli's exclusion principle. 
\end{remark}

\subsubsection{Stability}

We wish to show that, for fixed $u\colon \Lambda \to \mathbb R^d$ and an associated self-consistent electronic density $\rho$, the quantities $O_\ell(u;\rho)$ are exponentially localised. As shown in \cite{ChenOrtnerThomas2019:locality} for the linear model (that is, neglecting the $v(\rho(\ell)) \delta_{\ell k}$ term in \cref{a:two-centre}), the exponent in these locality results are linear in $\mathsf{d}_\mathfrak{o}$ from \cref{eq:do}. For the more complicated, non-linear model that we consider here, the locality also depends on the stability of the model; discussed below.

Supposing that $(u,\rho)$ satisfies a natural stability condition (see \asStab, below), it is possible to rewrite the local quantities of interest as a function of the displacements. That is, for $\widetilde{u}$ in a neighbourhood of $u$, there exists a locally unique $\widetilde{\rho} = \widetilde{\rho}(\widetilde{u})$ in a neighbourhood of $\rho$ such that $(\widetilde{u},\widetilde{\rho}(\widetilde{u}))$ satisfies \asSC~and we can therefore write
\begin{align}
\label{eq:local_integral_sc}
    O_\ell^\mathrm{sc}(\widetilde{u}) \coloneqq 
        O_\ell\big(\widetilde{u};\widetilde{\rho}(\widetilde{u})\big).
\end{align}
Moreover, the mapping $\widetilde{u} \mapsto O^\mathrm{sc}_\ell(\widetilde{u})$ is $\nu$ times continuously differentiable in a neighbourhood of $u$. See Lemma~\ref{lem:perturbation} for the rigorous statement. 

We can now consider the derivatives of the local quantities of interest with respect to the perturbation of atomic positions. Using \cref{eq:local_integral}, it is sufficient to consider the derivatives of the resolvent operators. 
Since the linear contribution has been studied in \cite{ChenOrtnerThomas2019:locality}, we are only concerned with the additional non-linear part, which involves derivatives of the electronic density. Due to the self-consistency, we obtain the formula
\begin{align}\label{eq:stab_motivation}
    \frac
        { \partial \rho(\ell) }
        { \partial [u(m)]_i } 
    =   \left[
            (I - \mathscr L(u;\rho))^{-1} \phi^{(m)} 
        \right]_\ell
\end{align}
where the \textit{stability operator}, $\mathscr L(u;\rho)$, is the Jacobian of $F(u;\rho)$ with respect to $\rho$ and $\phi^{(m)} \in \ell^2(\Lambda)$. 
%
%
Therefore, 
the following stability condition, which we take from \cite{ELu10,ELu2012}, is the minimal starting assumption required for the analysis:
\hypertarget{asStab}{\label{asStab}}
\begin{assumption}[2.1cm]{\asStab}
    We say $(u,\rho)$ is stable if $I - \mathscr L(u;\rho)$ is invertible as an operator on $\ell^2(\Lambda)$ where $\mathscr L(u;\rho)$ is the Jacobian of $F(u;\rho)$ with respect to $\rho$. For a stable configuration $(u,\rho)$, we write
    \begin{align}\label{eq:dstab}
        \mathsf{d}_\mathscr{L} \coloneqq \mathrm{dist}(1, \mathscr{L}(u;\rho) ) > 0.
    \end{align}
\end{assumption}
\begin{remark}
    \textit{(i)} Equivalently, \cref{eq:dstab} states that $\|\left( I - \mathscr L(u;\rho) \right)^{-1} \|_{\ell^2\to\ell^2} \leq \mathsf{d}_\mathscr{L}^{-1}$. 
    
    \textit{(ii)} A simple calculation reveals that 
    $\mathscr L(u;\rho)\colon \ell^2(\Lambda) \to \ell^2(\Lambda)$
    has matrix entries
    \begin{align}\label{eq:L}
        \mathscr{L}(u;\rho)_{\ell k} 
            = \frac{1}{2\pi i} \oint_{\mathscr C_f} f(z-\mu) 
                \sum_{a,b = 1}^{\numorbitals}
                \left(
                    \left[
                        \left( \Ham(u;\rho) - z \right)^{-1}
                    \right]^{ab}_{\ell k}
                \right)^2 
            \mathrm{d}z \, v^\prime(\rho(k)).
    \end{align}
\end{remark}

\subsubsection{Locality}

We are now in a position to state general locality estimates for $\beta \in (0,\infty)$ and for insulators in the case $\beta = \infty$:
\begin{theorem}[General Locality Estimates]
\label{thm:locality}
    Suppose $(u,\rho)$ satisfies \asNonInter, \asSC, \asStab~and let $\mathsf{d}_{\mathfrak{o}}$, $\mathsf{d}_{f}$, $\mathsf{d}_\mathscr{L}>0$ be the constants from \cref{eq:do} and \cref{eq:dstab}.
    Then, for $1 \leq j \leq \ctHamregularity$, $\ell \in \Lambda$, $\bm{m} = (m_1,\dots,m_j) \in \Lambda^j$ and $1\leq i_1,\dots,i_j \leq d$, there exists $C_{j}>0$ such that
    \begin{align*}
        \left| 
            \frac
                {\partial^j O^\mathrm{sc}_{\ell}(u)}
                {\partial [u(m_1)]_{i_1}\dots \partial [u(m_j)]_{i_j}}
        \right|
        \leq C_{j} e^{-\eta \sum_{l=1}^j r_{\ell m_l}(u) }
    \end{align*}
    where $\eta \coloneqq c \min\left\{ 1, \mathsf{d}_{\mathfrak{o}}, \mathsf{d}_{f}, c_f \mathsf{d}_{\mathscr L}\right\}$, $c_f \coloneqq \mathsf{d}_{f}^2 \min\{ 1, \mathsf{d}_{f}^{d+1} \}$ and $c$ is a positive constant depending on 
    $\gamma_0$, $h_0$, $\mathfrak{m}$, $\numorbitals$, $d$, $j$, $\|Du\|_{\ell^2_\ctGamma}$, $\|v^\prime\|_\infty$ and on the length of $\mathscr C_f$.
\end{theorem}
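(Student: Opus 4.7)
The plan is to reduce everything to two exponential-decay inputs for operator matrix entries: a Combes--Thomas estimate for $(\Ham(u;\rho)-z)^{-1}$ on the contour $\mathscr C_\mathfrak{o}$ (and on $\mathscr C_f$), and an analogous weighted-norm estimate for the stability inverse $(I-\mathscr L(u;\rho))^{-1}$. First I would differentiate the contour-integral representation of $O_\ell^\mathrm{sc}(u) = O_\ell(u;\tilde\rho(u))$ under the integral sign, using the resolvent identity
\begin{equation*}
    \frac{\partial}{\partial t}\bigl(\Ham - z\bigr)^{-1} = -(\Ham - z)^{-1}\frac{\partial \Ham}{\partial t}(\Ham - z)^{-1}
\end{equation*}
iteratively. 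Each differentiation of $\Ham(u;\tilde\rho(u))$ with respect to $[u(m_l)]_{i_l}$ splits by the chain rule into a \emph{direct} contribution (differentiating the hopping terms $h^{ab}_{\ell k}$, which is pointwise local at $m_l$ and exponentially decaying by \asTB) and a \emph{self-consistent} contribution through $v'(\tilde\rho(\ell))\,\partial \tilde\rho(\ell)/\partial [u(m_l)]_{i_l}$. The second contribution is handled by implicit differentiation of $\tilde\rho=F(u;\tilde\rho)$, which expresses $\partial_u\tilde\rho$ as $(I-\mathscr L)^{-1}$ applied to a source $\phi^{(m)}$ whose entries inherit the exponential decay of the resolvent on $\mathscr C_f$.

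After $j$ differentiations, the formula for $\partial^j O_\ell^\mathrm{sc}/\partial u^j$ expands into a finite (combinatorial, bounded by $C_j$) sum of terms, each of the form
\begin{equation*}
    \oint_{\mathscr C_\mathfrak{o}} \mathfrak{o}(z)\;\langle\delta_\ell,\, G_0\,V_1\,G_1\,V_2\,\cdots\,V_N\,G_N\,\delta_\ell\rangle\,\mathrm{d}z,
\end{equation*}
where each $G_i$ is a resolvent $(\Ham-z)^{-1}$ (or a squared resolvent coming from the contour defining $\mathscr L$), each $V_i$ is either a local, exponentially decaying perturbation at some site in $\{m_1,\dots,m_j\}$ or the stability inverse $(I-\mathscr L)^{-1}$ composed with a decaying source. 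I would then apply the standard Combes--Thomas argument to each resolvent factor to obtain off-diagonal decay at rate $c\min\{1,\mathsf d_\mathfrak{o}\}$ on $\mathscr C_\mathfrak{o}$ and $c\min\{1,\mathsf d_f\}$ on $\mathscr C_f$, chain these bounds via the triangle inequality along the sequence of intermediate sites, and sum the geometric series over those intermediate sites (using \asNonInter{} to control the lattice-cone counting factors).

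The main obstacle is the quantitative off-diagonal decay of $(I-\mathscr L(u;\rho))^{-1}$, which is where the peculiar factor $c_f=\mathsf d_f^2\min\{1,\mathsf d_f^{d+1}\}$ enters. My plan is a weighted-conjugation (Combes--Thomas-style) argument: conjugate $\mathscr L$ by the multiplication operator $T_\gamma=\mathrm{diag}(e^{\gamma|\cdot-k_0|})$ and show
\begin{equation*}
    \|T_\gamma \mathscr L T_\gamma^{-1}-\mathscr L\|_{\ell^2\to\ell^2}\;\leq\;C\,\gamma\,c_f^{-1}\,\|v'\|_\infty,
\end{equation*}
using \cref{eq:L}, the Combes--Thomas decay of the resolvent on $\mathscr C_f$ (rate $\sim\mathsf d_f$, with a prefactor from the length of $\mathscr C_f$), and the fact that squaring the resolvent costs an extra $\mathsf d_f$ together with a $\mathsf d_f^{-(d+1)}$ factor from the discrete convolution sum over the $d$-dimensional lattice cone. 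Choosing $\gamma\leq \tfrac12 c_f\,\mathsf d_\mathscr L$ then keeps $T_\gamma(I-\mathscr L)T_\gamma^{-1}$ invertible with $\ell^2$-norm of the inverse bounded by $2\mathsf d_\mathscr L^{-1}$, which immediately gives entrywise decay of $(I-\mathscr L)^{-1}$ at the required rate $c_f\,\mathsf d_\mathscr L$ by testing against $\delta$-basis vectors. The final exponent $\eta=c\min\{1,\mathsf d_\mathfrak{o},\mathsf d_f,c_f\mathsf d_\mathscr L\}$ is then the minimum of the per-factor rates arising in the chains. The constants $\|Du\|_{\ell^2_\ctGamma}$, $\numorbitals$, and the length of $\mathscr C_f$ enter only through the prefactors, via Lemma~\ref{lem:perturbation} (ensuring the local regularity of $\tilde\rho(\tilde u)$) and the cone-counting bounds from \asNonInter.
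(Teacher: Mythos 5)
Your proposal is correct and follows essentially the same route as the paper: differentiate the contour integral, split the derivative of $\Ham$ into hopping and self-consistent contributions, solve for $\partial\rho/\partial u$ via $(I-\mathscr L)^{-1}\phi^{(m)}$ from the implicitly differentiated self-consistency equation, and obtain off-diagonal decay of both $(\Ham-z)^{-1}$ and $(I-\mathscr L)^{-1}$ by a Combes--Thomas weighted-conjugation bound. Your accounting for $c_f$ --- the $\mathsf d_f^{-2}$ prefactor on the entries of $\mathscr L$ from squaring the resolvent, and the $\mathsf d_f^{-(d+1)}$ from the $d$-dimensional lattice sum in the conjugation estimate --- is exactly what the paper produces by invoking Lemma~\ref{lem:CT} with $T=\mathscr L$, $c_T\sim\mathsf d_f^{-2}$ and $\gamma_T\sim\gamma_\mathrm r(\mathsf d_f)$.
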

\begin{proof}
    The proof of this result follows the analogous proof in the linear case \cite{ChenOrtnerThomas2019:locality}, together with bounds on the non-linear contribution \cref{eq:stab_motivation}. Full details are presented in \S\ref{sec:proofs}.
\end{proof}

\section{Results: Point Defects in Insulating Materials}
\label{section:point_defects}
Now we consider the specific example of point defect reference configurations. In this case we show ``improved'' locality estimates in which the pre-factors and exponents behave like the corresponding reference constants.  

            \subsection{Point Defect Configurations}                

We suppose that $\Lambda^\mathrm{ref} \subset \mathbb R^d$ is multi-lattice and $\rho^\mathrm{ref}$ is a corresponding self-consistent electronic density:
\hypertarget{asREF}{\label{asREF}}
\begin{assumption}[1.85cm]{\asREF}
    We suppose that there exists a non-singular matrix $\mathsf{A} \in \mathbb R^{d\times d}$ and a {unit cell} $\Gamma \subset \mathbb R^d$ such that $\Gamma$ is finite, contains the origin and
    \[
        \Lambda^\mathrm{ref} 
            \coloneqq \bigcup_{\gamma \in \mathbb Z^d} 
            \left(
                \Gamma + \mathsf{A}\gamma
            \right).
    \]
    Moreover, we require the Hamiltonian to satisfy the following translational invariance property: 
    $h^{ab}_{\ell + \mathsf{A}\gamma_1, k + \mathsf{A}\gamma_2}(\xi) = h^{ab}_{\ell k}(\xi)$
    for all $\gamma_1, \gamma_2 \in \mathbb Z^d$, $\xi \in \mathbb R^d$ and $1\leq a,b \leq \numorbitals$. Further, we suppose $\rho^\mathrm{ref}$ is a translational invariant, self-consistent electronic density: 
    $(x^\mathrm{ref}, \rho^\mathrm{ref})$
    satisfies \asSC~where 
    $x^\mathrm{ref}\colon\Lambda^\mathrm{ref}\to\Lambda^\mathrm{ref}$
    is the identity configuration on 
    $\Lambda^\mathrm{ref}$ and 
    $\rho^\mathrm{ref}(\ell + \mathsf{A}\gamma) = \rho^\mathrm{ref}(\ell)$ 
    for all $\ell \in \Gamma$ and $\gamma \in \mathbb Z^d$.
\end{assumption}

It is possible that there are many electronic densities $\rho^\mathrm{ref}$ satisfying the conditions of \asREF. However, this is an issue that we will not concern ourselves with here, and, for the remainder of this paper, we will simply fix any electronic density $\rho^\mathrm{ref}$ satisfying \asREF.

The translational invariance property of the Hamiltonian states that all $\ell + \mathsf{A}\gamma$ are of the same atomic species.

To simplify notation, when we consider the reference configuration ($\Lambda = \Lambda^\mathrm{ref}$), we will write,
\begin{gather*}
    \Ham^\mathrm{ref} \coloneqq \Ham(\bm{0};\rho^\mathrm{ref})
    \quad \textup{and} \quad 
    \mathscr{L}^\mathrm{ref} \coloneqq \mathscr{L}(\bm{0};\rho^\mathrm{ref}).
\end{gather*}
By exploiting the translational invariance of the the reference configuration, and by use of the Bloch transform \cite{bk:kittel}, we may conclude that $\sigma(\Ham^\mathrm{ref})$ and $\sigma(\mathscr L^\mathrm{ref})$ can be written as a union of finitely many spectral bands:
\begin{align}
    \sigma(\Ham^\mathrm{ref}) = \bigcup_n \lambda_n(\Gamma^\star) 
    \qquad \textup{and} \qquad
    \sigma(\mathscr L^\mathrm{ref}) = \bigcup_n \varepsilon_n(\Gamma^\star) 
\end{align}
where $\Gamma^\star \subset \mathbb R^d$ is a compact, connected set and $\lambda_n$ and $\varepsilon_n$ are continuous functions. Full details follow the calculations of \cite{ELu10} and are given in Appendix~\ref{app:bands} for completeness. 

In the following, we consider insulating materials and thus assume that there is a band gap in the reference Hamiltonian:
\hypertarget{asGAP}{\label{asGAP}}
\begin{assumption}{\asGAP}
    We assume $\mu \not\in \sigma(\Ham^\mathrm{ref})$ and define 
    \[
        \mathsf{g}^\mathrm{ref} \coloneqq 
            \inf
                \left( 
                    \sigma(\Ham^\mathrm{ref}) \cap (\mu,\infty) 
                \right) - 
            \sup
                \left( 
                    \sigma(\Ham^\mathrm{ref}) \cap (-\infty,\mu) 
                \right)
            > 0.
    \]
\end{assumption}
Therefore, in the finite temperature case ($\beta < \infty$), we may consider a contour $\mathscr C_f$ as in \cref{eq:Fl} with $\mathsf{d}^\mathrm{ref}_f \coloneqq \min_{z \in \mathscr C_f} \mathrm{dist}(z,\sigma(\Ham^\mathrm{ref})) \geq \frac{1}{2}\mathrm{dist}\left(\mu, \sigma(\Ham^\mathrm{ref})\right)$. In particular, we have $\mathsf{d}^\mathrm{ref}_f \not\to 0$ in the zero temperature limit. On the other hand, for zero temperature ($\beta = \infty$), we may choose $\mathscr C_f$ as in \cref{eq:Fl} with $\mathsf{d}^\mathrm{ref}_f = \frac{1}{2}\mathsf{g}^\mathrm{ref}$.

Further, we suppose that the reference configuration is stable:
\hypertarget{asSTABref}{\label{asSTABref}}
\begin{assumption}{\asSTABref}
    $\exists \,\mathsf{d}_{\mathscr L}^{\mathrm{ref}}>0$ 
    such that 
    $I - \mathscr L^\mathrm{ref}$
    is invertible with 
    $\|(I - \mathscr L^\mathrm{ref})^{-1}\|_{ \ell^2 \to \ell^2 } 
        \leq (\mathsf{d}_{\mathscr L}^{\mathrm{ref}})^{-1}$.
\end{assumption}
Now, given a reference, $\Lambda^\mathrm{ref}$, as above, we consider point defect configurations, $\Lambda$, satisfying:
\hypertarget{asP}{\label{asP}}
\begin{assumption}{\asP}
    There exists $R_\mathrm{def}>0$ such that 
    $\Lambda \setminus B_{R_\mathrm{def}} = \Lambda^\mathrm{ref} \setminus B_{R_\mathrm{def}}$
    and $\Lambda \cap B_{R_\mathrm{def}}$ is finite.
\end{assumption}

We will consider electronic densities $\rho: \Lambda \to \mathbb R_+$ satisfying the following mild technical assumption on the far-field behaviour:
\hypertarget{asFF}{\label{asFF}}
\begin{assumption}{\asFF}
    $
        \lim\limits_{|\ell| \to \infty}|v(\rho(\ell)) - v(\rho^\mathrm{ref}(\ell))| = 0
    $
    where $\rho^\mathrm{ref}$ is the fixed electronic density satisfying \asREF.
\end{assumption}
This assumption is explained in more detail in Remark~\ref{rem:FF}, below.

We now restrict the class of admissible configurations by considering finite energy displacements, $u \in \mathscr{W}^{1,2}(\Lambda)$, and show that, for such displacements, the spectra can be described in terms of the reference spectra.
\begin{lemma}[Perturbation of the Spectrum]
\label{eq:perturbation_spectrum}
    Let $u \in \dot{\mathscr W}^{1,2}(\Lambda)$ satisfy $\asNonInter$ and $\rho$ be an associated electronic density satisfying \asSC~and \asFF. 
    %
    %
    Then, for all $\delta > 0$, there exists $S_\delta$ such that 
    $\#\left(
            \sigma( \Ham(u;\rho) ) \setminus 
                B_\delta( \sigma(\Ham^\mathrm{ref}) )
        \right) 
        + 
    \#\left(
        \sigma( \mathscr L(u;\rho) ) \setminus 
            B_\delta( \sigma(\mathscr{L}^\mathrm{ref}) )
    \right) \leq S_\delta$.
\end{lemma}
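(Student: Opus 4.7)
The overall strategy is to show that both $\Ham(u;\rho) - \Ham^\mathrm{ref}$ and $\mathscr{L}(u;\rho) - \mathscr{L}^\mathrm{ref}$ are compact operators on $\ell^2$, after a unitary identification that matches the index sets $\Lambda$ and $\Lambda^\mathrm{ref}$ outside $B_{R_\mathrm{def}}$. Once compactness is in hand, Weyl's theorem on the invariance of the essential spectrum under compact perturbations gives $\sigma_\mathrm{ess}(\Ham(u;\rho)) = \sigma_\mathrm{ess}(\Ham^\mathrm{ref})$ and the analogous identity for $\mathscr{L}$. The reference spectra coincide with their essential parts (they are pure bands by \asREF\ and the Bloch decomposition), so the spectrum of the perturbed operator outside $B_\delta(\sigma(\Ham^\mathrm{ref}))$ consists entirely of isolated eigenvalues of finite multiplicity, each at distance $\geq \delta$ from $\sigma_\mathrm{ess}$. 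Since the total spectrum is globally bounded (as recorded after \asTB) and such eigenvalues can only accumulate at the essential spectrum, only finitely many may lie outside $B_\delta$, giving the finite bound $S_\delta$.

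To obtain the compactness of $T_\Ham \coloneqq \Ham(u;\rho) - \Ham^\mathrm{ref}$, I would split it into three pieces: (i) a finite-rank piece supported in a neighbourhood of $B_{R_\mathrm{def}}$ that absorbs the defect core and the index-set mismatch; (ii) the diagonal remainder $(v(\rho(\ell)) - v(\rho^\mathrm{ref}(\ell)))\delta_{\ell k}\delta_{ab}$ outside $B_{R_\mathrm{def}}$, which is a multiplication operator whose entries tend to zero by \asFF, and is therefore compact; and (iii) the off-diagonal piece $h^{ab}_{\ell k}(\bm{r}_{\ell k}(u)) - h^{ab}_{\ell k}(\ell - k)$. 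For (iii), I would approximate by truncations to $|\ell|, |k| \leq R$, each of finite rank, and control the tail operator norm via a Schur-type bound: each matrix entry is dominated by the exponential envelope from \asTB\ and additionally vanishes as $|\ell| \to \infty$ with $k - \ell = \sigma$ fixed, because $D_{-\sigma} u(\ell) \to 0$ for each fixed $\sigma$ (a pointwise consequence of $u \in \W(\Lambda)$). Combining pointwise decay with the exponential envelope through dominated convergence drives the tail norm to zero as $R \to \infty$.

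For $\mathscr{L}$, I would use the explicit matrix-entry formula \cref{eq:L} together with the resolvent identity
\[
    (\Ham - z)^{-1} - (\Ham^\mathrm{ref} - z)^{-1} = (\Ham^\mathrm{ref} - z)^{-1}\,T_\Ham\,(\Ham - z)^{-1}
\]
to express $\mathscr{L}(u;\rho) - \mathscr{L}^\mathrm{ref}$ as a sum of terms, each containing at least one factor of $T_\Ham$ (compact by the previous step) sandwiched between bounded operators, plus a remainder controlled by $v^\prime(\rho(k)) - v^\prime(\rho^\mathrm{ref}(k))$. Smoothness of $v$ with bounded derivatives makes $v^\prime$ Lipschitz, so this last factor decays provided $|\rho(k) - \rho^\mathrm{ref}(k)| \to 0$; I would extract this pointwise decay either by combining \asFF\ with a local inversion argument on the self-consistency equation \asSC\ using \asSTABref, or by a direct matrix-level compactness estimate on the entries of $\mathscr{L}(u;\rho) - \mathscr{L}^\mathrm{ref}$ that avoids pointwise differences in $\rho$ altogether.

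The main obstacle will be the off-diagonal compactness in step (iii): membership in $\W(\Lambda)$ only supplies an exponentially weighted $\ell^2$ bound on $Du$, not uniform decay, so pointwise decay along each fixed $\sigma$-slice must be carefully combined with the uniform exponential envelope from \asTB\ inside a Schur bound that closes simultaneously over all sites. The parallel treatment of $\mathscr{L}$ is delicate for a related reason: \asFF\ directly controls the composition $v \circ \rho$ rather than $\rho$ itself, so passing to bounds on $v^\prime \circ \rho - v^\prime \circ \rho^\mathrm{ref}$ requires either additional structural input or a matrix-level estimate that bypasses the pointwise difference entirely.
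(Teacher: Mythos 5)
Your strategy is sound and arrives at the right conclusion, but it takes a genuinely different route from the paper. You prove compactness of $\Ham(u;\rho)-\Ham^\mathrm{ref}$ and $\mathscr L(u;\rho)-\mathscr L^\mathrm{ref}$, invoke Weyl's theorem on stability of the essential spectrum under compact perturbations, and then use that $\sigma(\Ham^\mathrm{ref})$ and $\sigma(\mathscr L^\mathrm{ref})$ are pure bands (hence pure essential spectrum) together with the a priori boundedness of $\sigma(\Ham)$ to rule out accumulation of discrete eigenvalues away from the bands. The paper instead establishes a \emph{quantitative} decomposition $T=T^\mathrm{ref}+T^\mathrm{FR}+T^\delta$ with $T^\mathrm{FR}$ finite rank and $\|T^\delta\|_\mathrm{F}\leq\delta$ (Lemmas~\ref{lem:Ham_perturbation} and \ref{lem:stability_perturbation}) and then cites a bookkeeping argument from \cite{ChenOrtnerThomas2019:locality}: the small Hilbert--Schmidt piece moves the spectrum by at most $\delta$, and the finite-rank piece can create at most $\mathrm{rank}(T^\mathrm{FR})$ additional eigenvalues, giving an explicit $S_\delta$. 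Your approach is cleaner for the qualitative statement at hand (and handles the non-self-adjoint $\mathscr L$ without any min--max machinery), but the paper's stronger decomposition is not wasted effort: it is exactly what feeds the improved Combes--Thomas estimates (Lemma~\ref{lem:CT-improved}) later in \S\ref{sec:proofs_improved}, so the paper extracts two results from one computation.

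On the specific gap you flag at the end — that \asFF\ only controls $v\circ\rho - v\circ\rho^\mathrm{ref}$ rather than $v'\circ\rho - v'\circ\rho^\mathrm{ref}$ — the resolution is in Remark~\ref{rem:FF}\textit{(iii)}: under the standing hypotheses, \asFF\ is shown to be equivalent to the ostensibly stronger condition $\|\rho-\rho^\mathrm{ref}\|_{\ell^2}<\infty$. This gives pointwise $\rho(\ell)-\rho^\mathrm{ref}(\ell)\to 0$, and since $v$ has bounded derivatives (so $v'$ is Lipschitz), $v'(\rho(\ell))-v'(\rho^\mathrm{ref}(\ell))\to 0$ follows. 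That closes the diagonal-multiplication compactness estimate you need for the $\mathscr L$ step, and is the ``additional structural input'' you were looking for. Your step (iii) — off-diagonal compactness via the fact that $D_\sigma u(\ell)\to 0$ for each fixed $\sigma$ when $u\in\dot{\mathscr W}^{1,2}(\Lambda)$, dominated by the exponential envelope of \asTB\ — is exactly right, and a Frobenius-norm tail bound (rather than Schur) is the cleanest way to close it and keeps you within the same apparatus the paper uses elsewhere.
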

\begin{proof}
We may apply Lemmas~\ref{lem:Ham_perturbation} and \ref{lem:stability_perturbation} together with \cite[Proof of Lemma~3]{ChenOrtnerThomas2019:locality} to conclude.
\end{proof}
Supposing that \asGAP~is satisfied, Lemma~\ref{eq:perturbation_spectrum} states that there are at most finitely many isolated eigenvalues lying inside the band gap and bounded away from the band edges.

        \subsection{Improved Locality Estimates}        

Just as in the linear case \cite{ChenOrtnerThomas2019:locality}, a Combes-Thomas resolvent estimate \cite{CombesThomas1973} applied to the spectral projections corresponding to the finitely many eigenvalues bounded away from the edges of the bands, together with a finite rank update formula (\textit{i.e.}~the Woodbury identity), allows us to approximate
    $(\Ham(u;\rho) - z)^{-1}$ and $(I - \mathscr L(u;\rho))^{-1}$ 
by finite rank updates of the reference resolvents 
    $(\Ham^\mathrm{ref} - z)^{-1}$ and $(I - \mathscr L^\mathrm{ref})^{-1}$,
respectively. Therefore, by applying the locality estimates of Theorem~\ref{thm:locality} on the reference spectrum, we obtain the following improved estimates for point defect configurations:

\begin{theorem}[Improved Locality Estimates]
\label{thm:improved_locality}
    Suppose that the reference configuration $\Lambda^\mathrm{ref}, \rho^\mathrm{ref}$ satisfies \asREF, \asGAP~and \asSTABref. Moreover, we fix $\Lambda$ satisfying \asP,
    $u\in\dot{\mathscr W}^{1,2}(\Lambda)$ and $\rho$ 
    satisfying \asNonInter, \asSC, \asStab~and \asFF. %
    %
    Then,
    \begin{itemize}
        \item[(i)] for $1 \leq j \leq \ctHamregularity$, $\ell \in \Lambda$, $\bm{m} = (m_1,\dots,m_j) \in \Lambda^j$ and $1\leq i_1,\dots,i_j \leq d$, there exists positive constants $C_j(\ell,\bm{m}) = C_j$, $\eta_j = \eta_j(\ell,\bm{m})$ such that
        \begin{align*}
            \left|\frac
                    {\partial^j O^\mathrm{sc}_{\ell}(u)}
                    {\partial [u(m_1)]_{i_1} \dots \partial [u(m_j)]_{i_j}} 
            \right|
            \leq C_{j} e^{-\eta_j \sum_{l=1}^j r_{\ell m_l}(u) }
        \end{align*}
        where 
        $\eta_j \coloneqq c_j \min
            \left\{ 
                1, 
                \mathsf{d}^\mathrm{ref}_{\mathfrak{o}}, 
                \mathsf{d}^\mathrm{ref}_{f}, 
                c_{f}(\ell,\bm{m}) \mathsf{d}^\mathrm{ref}_{\mathscr L}
            \right\}$,
        $c_j>0$ depends on 
            $j$,
            $\gamma_0$, 
            $h_0$, 
            $\mathfrak{m}$, 
            $d$, 
            $\|v^\prime\|_\infty$ 
        and on the lengths of 
            $\mathscr C_\mathfrak{o}$,
            $\mathscr{C}_f$ and 
        $c_{f}$ is a constant depending only on 
            $\ell$, 
            $\bm{m}$, 
            ${\mathsf{d}^\mathrm{ref}_{f}}$ and 
            ${\mathsf{d}_{f}}$.
        
        \item[(ii)] $C_j(\ell,\bm{m})$ is uniformly bounded and $c_{f}(\ell,\bm{m})$ is uniformly bounded away from zero independently of $\ell$ and $\bm{m}$. Let 
            $C^\mathrm{ref}_j\coloneqq C_j(\ell,\bm{m})$ and 
            $c^\mathrm{ref}_{f}\coloneqq c_{f}(\ell,\bm{m})$ 
        when 
        $\Lambda = \Lambda^\mathrm{ref}$, 
        $u = 0$ and 
        $\rho = \rho^\mathrm{ref}$. 
        If $\ell,m_1, \dots, m_j \in B_R(\xi)$ for some $R > 0$, then 
        $C_j(\ell,\bm{m}) \to C_j^\mathrm{ref}$ 
        and 
        $c_{f}(\ell,\bm{m}) \to c_{f}^\mathrm{ref}$ 
        as $|\xi| \to \infty$, with exponential rates.
    \end{itemize}
\end{theorem}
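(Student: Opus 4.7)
The plan is to follow the strategy the author sketches in the paragraph preceding the statement: express the perturbed resolvents $(\Ham(u;\rho) - z)^{-1}$ and $(I - \mathscr L(u;\rho))^{-1}$ as finite-rank updates of their reference counterparts via a Woodbury identity, then apply Theorem~\ref{thm:locality} to the translationally invariant reference operators (whose spectral constants are now the improved reference constants $\mathsf{d}_\mathfrak{o}^\mathrm{ref}$, $\mathsf{d}_f^\mathrm{ref}$, $\mathsf{d}_\mathscr{L}^\mathrm{ref}$), while controlling each finite-rank correction with a Combes--Thomas estimate pinned at the defect core $B_{R_\mathrm{def}}$.

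The first step I would carry out is spectral: by Lemma~\ref{eq:perturbation_spectrum} combined with \asGAP~and \asSTABref, only finitely many eigenvalues of $\Ham(u;\rho)$ can lie inside the gap of $\sigma(\Ham^\mathrm{ref})$ and bounded away from the band edges, and similarly $\mathscr L(u;\rho)$ has at most finitely many eigenvalues near $1$ outside a neighbourhood of $\sigma(\mathscr L^\mathrm{ref})$. I would then choose the contours $\mathscr C_f$ and $\mathscr C_\mathscr{L}$ to enclose the reference spectra together with any such defect eigenvalues while staying at distances comparable to $\mathsf{d}_f^\mathrm{ref}$ and $\mathsf{d}_\mathscr{L}^\mathrm{ref}$ from the reference spectra. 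Second, I would perform the Woodbury update. The perturbation $\Ham(u;\rho) - \Ham^\mathrm{ref}$ is not literally finite rank, but it splits into a strictly localised part supported near $B_{R_\mathrm{def}}$ (using \asP~and the on-site piece $v(\rho)-v(\rho^\mathrm{ref})$ inside the defect core) plus a tail that decays by virtue of $u \in \dot{\mathscr W}^{1,2}$ together with \asFF; combined with a Combes--Thomas resolvent bound on $\mathscr C_f$, this yields a decomposition $(\Ham(u;\rho) - z)^{-1} = (\Ham^\mathrm{ref} - z)^{-1} + R_\mathrm{d}(z)$, where $R_\mathrm{d}(z)$ has (essentially) finite rank with range spanned by exponentially localised defect eigenvectors. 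A parallel argument produces $(I - \mathscr L(u;\rho))^{-1} = (I - \mathscr L^\mathrm{ref})^{-1} + S_\mathrm{d}$, with $S_\mathrm{d}$ also localised at the defect. Plugging these decompositions into \cref{eq:local_integral_sc}, with the $\rho$-derivatives expressed by \cref{eq:stab_motivation}, each term in the resulting contour-integral expansion is either a pure reference term (to which Theorem~\ref{thm:locality} applies with the reference constants) or carries at least one Combes--Thomas factor decaying exponentially in the distance from $\ell, m_1, \dots, m_j$ to $B_{R_\mathrm{def}}$. Summing both contributions yields the bound in~(i) with an exponent $\eta_j$ of the advertised form.

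For part (ii), uniform boundedness of $C_j(\ell,\bm{m})$ and the uniform lower bound on $c_f(\ell,\bm{m})$ are immediate from the decomposition, since the finite-rank pieces contribute uniformly in $\ell, \bm{m}$. Once $\ell, m_1, \dots, m_j \in B_R(\xi)$ with $|\xi|$ large, every correction term picks up an extra exponential factor $e^{-c|\xi|}$ from the Combes--Thomas decay of the defect spectral vectors away from the core, so only the pure reference contribution survives in the limit, yielding $C_j(\ell,\bm{m}) \to C_j^\mathrm{ref}$ and $c_f(\ell,\bm{m}) \to c_f^\mathrm{ref}$ exponentially. The main obstacle I anticipate is the stability operator: unlike $\Ham$, whose perturbation is explicit and spatially local, the difference $\mathscr L(u;\rho) - \mathscr L^\mathrm{ref}$ (see \cref{eq:L}) involves squared resolvent entries integrated along $\mathscr C_f$ together with the nonlinear factor $v^\prime(\rho(k))$, so its off-diagonal decay has to be bootstrapped from the Hamiltonian bounds of the first step and from \asFF. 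Carrying out the Woodbury decomposition for $\mathscr L$ self-consistently with the one for $\Ham$, and tracking the exponents carefully through the composition of the two resolvent identities so that the correct $c_f^\mathrm{ref} \mathsf{d}_\mathscr{L}^\mathrm{ref}$ term emerges, will be the delicate part.
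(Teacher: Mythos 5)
Your proposal is correct and matches the paper's strategy: a finite-rank-plus-small splitting of $\Ham(u;\rho)-\Ham^\mathrm{ref}$ (and, bootstrapped from it, of $\mathscr L(u;\rho)-\mathscr L^\mathrm{ref}$), a Woodbury identity to obtain improved Combes--Thomas resolvent estimates with exponents governed by the reference distances, and then a rerun of the Theorem~\ref{thm:locality} argument with the improved estimates, with the defect corrections decaying exponentially away from $B_{R_\mathrm{def}}$. The obstacle you flag for $\mathscr L$ is precisely what the paper resolves in Lemma~\ref{lem:stability_perturbation}, and the exponential convergence of $C_j$ and $c_f$ in part~(ii) follows, just as you describe, from the decay of the finite-rank corrections as $|\ell|+|m_l|-|\ell-m_l|\to\infty$.
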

\begin{remark}\label{rem:FF}
    Here, we briefly give examples of when the assumption \asFF~is
    %
    %
    satisfied and we show that it is equivalent to 
    $\|\rho - \rho^\mathrm{ref}\|_{\ell^2} < \infty$.
    
    \textit{(i) Derivative of $v$ sufficiently small.} If $\|v^\prime\|_\infty$ is sufficiently small, we may treat the non-linear tight binding model as a non-linear perturbation of the corresponding linear model and would thus expect the main assumption
    $|v(\rho(\ell)) -v(\rho^\mathrm{ref}(\ell))| \to 0$ 
    to be satisfied. In fact, we now show that if 
    $\|v^\prime\|_\infty$
    is sufficiently small, then the stronger condition 
    $\|\rho - \rho^\mathrm{ref}\|_{\ell^2} < \infty$
    is satisfied. Using the Combes-Thomas estimates (Lemma~\ref{lem:CT}) for the resolvents we can conclude that there exists $\eta > 0$ such that
    \begin{align}\label{remark:rho_rho_ref}
    \begin{split}
        &\left| \rho(\ell) - \rho^\mathrm{ref}(\ell) \right|
        \leq C \Big| 
                    \sum_a \oint_{\mathscr{C}_f} f(z-\mu) 
                    \big[ 
                        \mathscr R_z(u;\rho) - \mathscr R_z^\mathrm{ref} 
                    \big]^{aa}_{\ell \ell} \mathrm{d}z 
                \Big| \\
        &\quad\leq C \sup_{z \in \mathscr C_f} \sum_a 
            \Big| 
                \big[ \mathscr R_z(u;\rho) 
                    \big( 
                        \Ham^\mathrm{ref} - \Ham(u;\rho) 
                    \big) \mathscr R_z^\mathrm{ref} 
                \big]^{aa}_{\ell \ell}  
            \Big| \\
        &\quad\leq  C \sum_{\ell_1,\ell_2} 
            e^{-\eta \left(
                        |\ell - \ell_1| + |\ell_1 - \ell_2| + |\ell_2 - \ell|
                    \right) } 
            |D_{\ell_2 - \ell_1} u(\ell_1)| + 
                    C \sum_{\ell_1} 
            e^{-2\eta |\ell - \ell_1|} 
                |v(\rho(\ell_1)) - v(\rho^\mathrm{ref}(\ell_1))|
    \end{split}
    \end{align}
    where $\mathscr R_z(u;\rho) \coloneqq (\Ham(u;\rho) - z)^{-1}$ and 
    $\mathscr R^\mathrm{ref}_z \coloneqq (\Ham^\mathrm{ref} - z)^{-1}$.
    Here, we have abused notation as the operators $\mathscr R_z(u;\rho)$ and $\mathscr R_z^\mathrm{ref}$ are defined on different spatial domains. This issue is resolved in \cite[\S4.3]{ChenOrtnerThomas2019:locality} and also briefly explained on page~\pageref{pg:spatial_domains}. After squaring \cref{remark:rho_rho_ref} and summing over $\ell \in \Lambda$, we obtain
    \begin{align*}
        \|\rho - \rho^\mathrm{ref}\|_{\ell^2}^2 
        &\leq 
        C_1+ {C}_2\|v^\prime\|^2_\infty \| \rho - \rho^\mathrm{ref} \|_{\ell^2}^2.
        %
        %
    \end{align*}
    Therefore, if $C_2\|v^\prime\|^2_\infty < 1$, then $\|\rho - \rho^\mathrm{ref}\|^2_{\ell^2} \leq C_1(1 - C_2\|v^\prime\|_\infty^2)^{-1}$. 
    
    \textit{(ii) Stability of the electronic structure.} Another approach involves integrating along a path between $\rho$ and $\rho^\mathrm{ref}$. In order to compare $u$ and reference configuration, we must assume that \asFF~is satisfied for
    %
    %
    $\rho(\bm{0})$, a self-consistent electronic density associated with the identity configuration on $\Lambda$. By the translational invariance of the Hamiltonian 
    (i.e.~for all $c\in\mathbb R^d$, $\Ham(u;\rho) = \Ham(u + c;\rho)$ where $(u+c)(\ell) = u(\ell) + c$),
    we obtain translational invariance of the quantities of interest (as in \cite{ChenLuOrtner18}). In particular, the quantities of interest may be written as functions of the finite difference stencil $Du(\ell)$ for some $\ell \in \Lambda$. Therefore, the electronic density solving $\rho = F(u;\rho)$ can also be written as a function of $Du(\ell)$. Now since 
    $\left|\frac
            {\partial \rho(\ell)}
            {\partial u(m)}
    \right| \lesssim e^{-\eta r_{\ell m}}$
    (see \cref{eq:d_rho_u_bound}), we formally obtain
    \begin{align}\label{eq:rho_minus_rho_ref}
    \begin{split}
        |v(\rho(\ell)) - v(\rho^\mathrm{ref}(\ell))| 
        &\leq  \|v^\prime\|_\infty 
                \left| 
                    \int_0^1 \sum_{\sigma \in \Lambda - \ell} 
                    \frac
                        {\partial \rho_t(\ell)}
                        {\partial D_\sigma u(\ell)}
                    \cdot D_\sigma u(\ell) \mathrm{d}t
                \right| 
                + |v(\rho(\bm{0})(\ell)) - v(\rho^\mathrm{ref}(\ell))| \\
        &\lesssim \sum_{\sigma \in \Lambda - \ell} 
            e^{-\eta|\sigma|} |D_\sigma u(\ell)| 
            + |v(\rho(\bm{0})(\ell)) - v(\rho^\mathrm{ref}(\ell))|
    \end{split}
    \end{align}
    where $\rho_t \coloneqq \rho(tDu(\ell))$. Therefore, by taking $\ell \to \infty$ we may conclude.
    However, in \cref{eq:rho_minus_rho_ref}, we have assumed that along the linear path between $u$ and $\bm{0}$, the electronic density is a well-defined differentiable function of the displacement. Generalising the argument above, we only need to assume that there exists a sequence of displacements such that we can integrate along a piecewise linear path between $u$ and $\bm{0}$. That is, along the path, we need unique self-consistent electronic densities, the uniform non-interpenetration assumption to be satisfied with a uniform constant and, in the case of zero Fermi-temperature, the spectrum of the Hamiltonian must avoid the chemical potential. 
    
    \textit{(iii) An equivalent assumption.} We claim that \asFF~is 
    %
    equivalent to the (\textit{a priori} stronger) condition that 
        $\|\rho - \rho^\mathrm{ref}\|_{\ell^2} < \infty$. 
    Indeed, by assuming that 
        $|v(\rho(\ell)) - v(\rho^\mathrm{ref}(\ell))| \to 0$, 
    the diagonal operator defined by 
    $D_{\ell \ell} \coloneqq v(\rho(\ell)) - v(\rho^\mathrm{ref}(\ell))$ 
    is compact and so Lemma~\ref{lem:Ham_perturbation} (given below) allows us to approximate the Hamiltonian $\Ham(u;\rho)$ with a finite rank update of $\Ham^\mathrm{ref}$. We can therefore use \cref{remark:rho_rho_ref} to obtain the following stronger bound: for all $\delta > 0$, there exists a Hilbert-Schmidt operator $P^\delta$ such that 
    $\|P^\delta\|_\mathrm{F} \leq \delta$ 
    and 
    $|\rho(\ell) - \rho^\mathrm{ref}(\ell)| 
        \leq C_\delta e^{-\eta |\ell|} + P^\delta_{\ell\ell}$.
    This is an argument similar to \cite[(4.18)$-$(4.20)]{ChenOrtnerThomas2019:locality}.
    
    We have simply written 
        $|v(\rho(\ell)) -v(\rho^\mathrm{ref}(\ell))| \to 0$ 
    as an assumption in Lemma~\ref{eq:perturbation_spectrum} and Theorem~\ref{thm:improved_locality} to simplify the presentation and avoid the technical issues detailed above. We briefly remark here that this is the minimal assumption needed for our analysis to hold. Indeed, if \asFF~is not satisfied, then the operator $\Ham(u;\rho) - \Ham^\mathrm{ref}$ is not compact and thus the compact perturbation results which we rely on in the proofs cannot be applied.
\end{remark}

\section{Conclusions}
\label{sec:conclusions}
We have extended the locality results of \cite{ChenOrtnerThomas2019:locality} to non-linear tight binding models. More specifically, the results of this paper are twofold: \textit{(i)} we have written analytic quantities of interest (which includes the total energy of the system) as the sum of exponentially localised site contributions. Moreover, \textit{(ii)} under a mild assumption on the electronic density, we have shown that point defects in the material only weakly affect the locality estimates. That is, away from the defect, where the local atomic environment resembles that of the corresponding defect-free configuration, the locality estimates resemble that of the defect-free case.     

The results of this paper represent a first natural stepping stone between the linear tight binding results of \cite{ChenOrtner16, ChenLuOrtner18,ChenOrtnerThomas2019:locality} towards more accurate electronic structure models, such as Kohn-Sham density functional theory. 

As well as justifying a number of interatomic potential and multi-scale methods
\cite
    {ChenOrtner16,
    ChenOrtner17,
    CsanyiAlbaretMorasPayneDeVita05},
we may use the locality results of this paper to formulate limiting variational problems for infinite systems. That is, for a fixed configuration $u_0$ with associated self-consistent electronic density $\rho_0$ such that $(u_0, \rho_0)$ is stable, we can renormalise the total energy and define 
\begin{align}\label{eq:grand_pot_diff_functional}
    \mathcal G^\beta(u) \coloneqq 
        \sum_{\ell\in\Lambda} \left( 
                                G^\beta_\ell(u) - G^\beta_\ell(u_0) 
                            \right)
\end{align}
where $G^\beta_\ell$ is given by \cref{eq:local_integral_sc} with 
$\mathfrak{g}^\beta(z) = \frac{2}{\beta} \log( 1 - f_\beta(z-\mu) )$ 
for finite Fermi-temperature and $\mathfrak{g}^\infty(z) = 2(z - \mu) \chi_{(-\infty,\mu)}(z)$ in the case of zero Fermi-temperature. By the stability of the configuration $(u_0,\rho_0)$, it follows from the locality results of this paper together with \cite{ChenNazarOrtner19} that \cref{eq:grand_pot_diff_functional} is well defined in a $\|D\cdot\|_{\ell^2_\ctGamma(\Lambda)}$-neighbourhood of $u_0$.

We can then consider the following geometry relaxation problems
\begin{align}\label{eq:min_problem}
    \overline{u} \in \arg\min
        \left\{ 
            \mathcal G^\beta(u) 
            \colon u \in B_\delta(u_0;\|D\cdot\|_{\ell^2_\ctGamma}) 
            \text{ satisfies \textbf{(L)}}
        \right\}
\end{align}
where ``$\arg\min$'' denotes the set of local minimisers. We emphasise here that, in order to define these problems, we require the differentiability of the site energies and so can only define these problems locally around stable configurations. We may follow the proofs of \cite{OrtnerThomas2020:pointdef}, to extend the results to the case of non-linear tight binding models. For example, we may show the following:
\begin{theorem}\label{thm:limits}
    Suppose that $\mu$ is fixed such that \asGAP~is satisfied and that $\overline{u}$ solves \cref{eq:min_problem} for $\beta = \infty$ such that  
    $\Braket
        {\delta^2\mathcal{G}^\infty(\overline{u})v,
        v} 
    \geq c_0\| Dv \|^2_{\ell^2_\ctGamma}$
    for all $v \in \dot{\mathscr W}^{1,2}(\Lambda)$ and some $c_0>0$. Then, there exists $\overline{u}_\beta$ solving \cref{eq:min_problem} with $\beta<\infty$ such that
    $\|D(\overline{u}_\beta - \overline{u})\|_{\ell^2_\ctGamma} \lesssim e^{-c\beta}$. 
\end{theorem}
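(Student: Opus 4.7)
The plan is to apply a quantitative inverse function theorem to the Euler--Lagrange map $\Phi^\beta \coloneqq \delta \mathcal G^\beta$, viewing $\Phi^\beta$ as an exponentially small perturbation of $\Phi^\infty$. Let $\overline\rho$ denote the self-consistent density associated with $\overline u$ at zero temperature. Since $\Phi^\infty(\overline u) = 0$ and $D\Phi^\infty(\overline u) = \delta^2 \mathcal G^\infty(\overline u)$ is coercive with constant $c_0$, a Newton--Kantorovich / contraction argument on the iteration $v_{n+1} = v_n - (D\Phi^\beta(v_n))^{-1} \Phi^\beta(v_n)$ starting from $v_0 = \overline u$ will produce the desired $\overline u_\beta$, provided I establish (a) a \emph{consistency} bound $\|\Phi^\beta(\overline u)\|_{(\dot{\mathscr W}^{1,2})^\ast} \lesssim e^{-c\beta}$ and (b) a \emph{stability} bound $\|D\Phi^\beta(w) - D\Phi^\infty(\overline u)\|_{\dot{\mathscr W}^{1,2} \to (\dot{\mathscr W}^{1,2})^\ast} \leq c_0/2$ uniformly for $w$ in a fixed $\|D\cdot\|_{\ell^2_\ctGamma}$-neighbourhood of $\overline u$. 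The final error bound then follows as $\|D(\overline u_\beta - \overline u)\|_{\ell^2_\ctGamma} \lesssim c_0^{-1}\|\Phi^\beta(\overline u)\|_\ast \lesssim e^{-c\beta}$.

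\textbf{Key estimates.} The analytic core is the exponential convergence $\mathfrak g^\beta \to \mathfrak g^\infty$ on contours valid for both $\beta < \infty$ and $\beta = \infty$. Under \asGAP\ together with the stability of $(\overline u,\overline\rho)$, I choose a contour $\mathscr C_f$ encircling $\sigma(\Ham(\overline u;\overline\rho)) \cap (-\infty,\mu)$ at distance $\geq \mathsf d_f > 0$ from the spectrum and, for $\beta$ above a fixed threshold $\beta_0$, avoiding all Matsubara poles $\mu + i\pi\beta^{-1}(2\mathbb Z+1)$. A direct comparison of $\mathfrak g^\beta(z) = \tfrac{2}{\beta}\log(1 - f_\beta(z-\mu))$ with $\mathfrak g^\infty(z) = 2(z-\mu)\chi_{(-\infty,\mu)}(z)$ on $\mathscr C_f$, together with an analogous contribution from a contour encircling the unoccupied spectrum (on which $\mathfrak g^\beta$ is itself of size $\beta^{-1} e^{-\beta \mathsf d_f}$ while $\mathfrak g^\infty \equiv 0$), yields $\sup_z |\mathfrak g^\beta(z) - \mathfrak g^\infty(z)| \lesssim \beta^{-1} e^{-c \beta \mathsf d_f}$ on both contours. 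Substituting into the contour integral representation \cref{eq:local_integral}, differentiating in $u(m)$ via the chain rule through the self-consistent density, and absorbing the density derivatives using \cref{eq:stab_motivation} together with the locality estimates of Theorem~\ref{thm:locality}, produces the pointwise bound
\[
    \Bigl|\tfrac{\partial G^\beta_\ell(\overline u)}{\partial [u(m)]_i} - \tfrac{\partial G^\infty_\ell(\overline u)}{\partial [u(m)]_i}\Bigr| \lesssim e^{-c\beta}\, e^{-\eta\, r_{\ell m}(\overline u)}.
\]
Pairing against a test $v \in \dot{\mathscr W}^{1,2}(\Lambda)$ and invoking the summability machinery of \cite{ChenNazarOrtner19} delivers the consistency bound (a); one further derivative, combined with the continuity of $D\Phi^\beta$ in $w$ (Lemma~\ref{lem:perturbation}), delivers the stability bound (b).

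\textbf{Completion and main obstacle.} With (a) and (b) in hand, the Newton--Kantorovich argument proceeds essentially as in \cite{OrtnerThomas2020:pointdef}, yielding a unique $\overline u_\beta$ within $C e^{-c\beta}$ of $\overline u$ satisfying $\Phi^\beta(\overline u_\beta) = 0$; the coercivity of $\delta^2 \mathcal G^\beta(\overline u_\beta)$ is inherited from $\delta^2 \mathcal G^\infty(\overline u)$ up to a vanishing loss, so $\overline u_\beta$ is a genuine local minimiser of \cref{eq:min_problem}. The principal obstacle is not the fixed-point step itself but ensuring that each intermediate $v_n$ continues to satisfy \asNonInter\ and admits a locally unique self-consistent density via \asStab, so that $\mathcal G^\beta(v_n)$ remains well defined. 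Non-interpenetration survives small $\|D\cdot\|_{\ell^2_\ctGamma}$-perturbations, while persistence of \asStab\ --- equivalently, control on the spectrum of $\mathscr L(v_n;\rho(v_n))$ near the eigenvalue $1$ --- follows from continuity of the stability operator in $u$, an ingredient already embedded in Lemma~\ref{lem:perturbation}. Once these two prerequisites are secured, the remainder of the argument is routine.
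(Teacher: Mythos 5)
Your overall strategy --- Newton--Kantorovich applied to $\delta\mathcal G^\beta$ viewed as an exponentially small perturbation of $\delta\mathcal G^\infty$, with the consistency estimate driven by the convergence $\mathfrak g^\beta \to \mathfrak g^\infty$ on a contour valid for both temperatures --- matches the paper's sketch proof. The one place you go astray is the final paragraph, where you attribute persistence of \asStab\ at finite $\beta$ to continuity of the stability operator in $u$ (Lemma~\ref{lem:perturbation}). That lemma operates at a \emph{fixed} $\beta$, whereas what is actually given is zero-temperature stability at $(\overline u,\overline\rho)$; before the Newton iteration can even start you must show that $I - \mathscr L^\beta(\overline u;\overline\rho)$ is invertible for all sufficiently large $\beta$, which is a comparison in $\beta$, not in $u$. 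The paper's sketch addresses exactly this first: it decomposes $I - \mathscr L^\beta(\overline u;\overline\rho) = (I - \mathscr L^\infty(\overline u;\overline\rho)) - (\mathscr L^\beta - \mathscr L^\infty)(\overline u;\overline\rho)$, controls the second term via zero-temperature limit estimates (\cite[Lemma~5.9]{OrtnerThomas2020:pointdef}), and only then invokes the implicit function theorem in $u$ to obtain the locally unique finite-$\beta$ self-consistent density near $\overline u$. You already possess the needed analytic ingredient --- your $\mathfrak g^\beta \to \mathfrak g^\infty$ estimate transferred through the contour representation \cref{eq:L} of the stability operator --- but it must be deployed at the stability operator \emph{before} the iteration begins, not subsumed under Lemma~\ref{lem:perturbation}.
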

\begin{proof}[Sketch of the Proof]
    Here, to distinguish between the finite and zero Fermi-temperature cases, we will write $F^{\beta}(u;\rho)$, $\mathscr L^\beta(u;\rho)$ and $F^\infty(u;\rho)$, $\mathscr L^\infty(u;\rho)$, respectively.

    Firstly, we note that there exists a locally unique electronic density $\overline{\rho} = F^\infty(\overline{u};\overline{\rho})$. By stability, $I - \mathscr L^\infty(\overline{u};\overline{\rho})$ is invertible and so it follows from zero Fermi-temperature limit results (see \cite[Lemma~5.9]{OrtnerThomas2020:pointdef}) that
    $I - \mathscr L^\beta(\overline{u};\overline{\rho}) = \left( I - \mathscr L^\infty(\overline{u};\overline{\rho}) \right) - (\mathscr L^\beta(\overline{u};\overline{\rho}) - \mathscr L^\infty(\overline{u};\overline{\rho}))$
    is also invertible for all sufficiently large $\beta$. This means that for ${u}$ in a neighbourhood of $\overline{u}$, there exists a locally unique electronic density ${\rho}_\beta$ satisfying ${\rho}_\beta = F^\beta({u};{\rho}_\beta)$. Therefore, for ${u}$ in a neighbourhood of $\overline{u}$, we may write $G^\beta_\ell({u}) \coloneqq G^\beta_\ell({u};{\rho}_\beta)$. This means that, for $\beta$ sufficiently large, we may apply the inverse function theorem on $\delta\mathcal G^\beta$ about $\overline{u}$ as in \cite[Theorem~2.3]{OrtnerThomas2020:pointdef}.
\end{proof}

In Theorem~\ref{thm:limits}, we restrict ourselves to the grand-canonical ensemble where there is a fixed chemical potential. By following the proofs of \cite{OrtnerThomas2020:pointdef}, one can also show analogous results for the canonical ensemble where the Fermi-level arises as a Lagrange multiplier for the particle number constraint.  

We believe that the thermodynamic limit results of \cite{OrtnerThomas2020:pointdef} can also be extended to the setting of non-linear tight binding models. The only additional technical detail is to show that the limiting configuration gives rise to stable configurations defined along the sequence of finite domain approximations. This means the choice of boundary condition and the number of electrons imposed plays a key role in the analysis. While we do not see any problem in extending the results of \cite{OrtnerThomas2020:pointdef} for a supercell model, it is much less clear how and when the boundary effects may inhibit the stability of the electronic structure when considering clamped boundary conditions, for example.

\section*{Acknowledgements}
Helpful discussions with Huajie Chen, Genevi\`{e}ve Dusson, Antoine Levitt and Christoph Ortner are gratefully acknowledged. The content of \S\ref{sec:results_general} is an extension of Antoine Meyer’s MMath thesis \cite[\S3.2]{AntoineMasters}.

\section{Proofs: General Locality Estimates}
\label{sec:proofs}
In order to simplify the notation in the following, we will write 
$\Ham(u;\rho) = \HamLin(u) + \HamNonLin(\rho)$
where 
    $\HamLin(u)_{\ell k}^{ab} 
        \coloneqq h_{\ell k}^{ab}( \bm{r}_{\ell k}(u) )$
and 
    $\HamNonLin(\rho)_{\ell k}^{ab}
        = v(\rho(\ell)) \delta_{\ell k} \delta_{ab}$.
Further, we write denote the resolvent operator by $\mathscr R_z(u;\rho) \coloneqq (\Ham(u;\rho) - z)^{-1}$.

                    \subsection{Preliminaries}                        

Firstly, we prove that we may write the local quantities of interest as a function of the displacement, \cref{eq:local_integral_sc}:
\begin{lemma}\label{lem:perturbation}
    Suppose that $(u,\rho)$ satisfies \asNonInter, \asSC~and \asStab. Then, there exist $\delta_u, \delta_\rho > 0$ such that for all 
        $\widetilde{u} \colon \Lambda \to \mathbb R^d$ 
        with $\|D(\widetilde{u} - u)\|_{\ell^2_\ctGamma(\Lambda)} < \delta_u$,
    there exists a unique electronic density 
    $\widetilde{\rho} = \widetilde{\rho}(\widetilde{u})$ 
    satisfying 
    $\| \widetilde{\rho} - \rho\|_{\ell^2(\Lambda)} < \delta_\rho$
    and 
    $\widetilde\rho = F(\widetilde u;\widetilde\rho)$. 
    Further, the mapping $\widetilde{u} \mapsto \widetilde{\rho}$ is smooth.
\end{lemma}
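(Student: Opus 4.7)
The plan is to deduce the lemma from the Banach-space implicit function theorem applied to the self-consistency map. Writing $\widetilde{u} = u + w$ and $\widetilde{\rho} = \rho + \eta$, I would define
\begin{equation*}
\Phi(w, \eta) \coloneqq \eta - \bigl( F(u+w;\, \rho+\eta) - F(u;\rho) \bigr),
\end{equation*}
so that, using \asSC, the condition $\widetilde{\rho} = F(\widetilde{u};\widetilde{\rho})$ becomes $\Phi(w,\eta) = 0$ with $\Phi(0,0) = 0$. The goal is to regard $\Phi$ as a map from a neighbourhood of the origin in $\dot{\mathscr{W}}^{1,2}(\Lambda) \times \ell^2(\Lambda)$, equipped with $\|D\cdot\|_{\ell^2_\ctGamma}$ and $\|\cdot\|_{\ell^2}$, into $\ell^2(\Lambda)$, verify that it is $C^\ctHamregularity$, compute $\partial_\eta \Phi(0,0) = I - \mathscr{L}(u;\rho)$, and invoke \asStab~to conclude invertibility of this partial derivative.

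The first step is to show that $\Phi$ is well defined and $C^\ctHamregularity$ near the origin. Write $\Delta \Ham \coloneqq \Ham(u+w;\rho+\eta) - \Ham(u;\rho) = \Delta \HamLin + \Delta \HamNonLin$, where by \asTB~the linear part depends on $Dw$ through the $C^\ctHamregularity$ hoppings $h^{ab}_{\ell k}$ and decays exponentially in $|\ell - k|$, while the nonlinear part is a diagonal operator with entries $v(\rho(\ell)+\eta(\ell)) - v(\rho(\ell))$. By shrinking $\delta_u,\delta_\rho$ I can arrange $\|\Delta\Ham\|_{\ell^2\to\ell^2} < \tfrac{1}{2}\mathsf{d}_f$, so that the contour $\mathscr{C}_f$ from \cref{eq:Fl} remains valid for the perturbed Hamiltonian and the resolvent identity
\begin{equation*}
\mathscr{R}_z(u+w;\rho+\eta) - \mathscr{R}_z(u;\rho) = -\mathscr{R}_z(u+w;\rho+\eta)\, \Delta \Ham\, \mathscr{R}_z(u;\rho)
\end{equation*}
holds uniformly in $z \in \mathscr{C}_f$. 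Combining this with the Combes--Thomas estimate (Lemma~\ref{lem:CT}) and a Young-type convolution bound, exactly as in \cref{remark:rho_rho_ref}, I would obtain
\begin{equation*}
\|F(u+w;\rho+\eta) - F(u;\rho)\|_{\ell^2(\Lambda)} \lesssim \|Dw\|_{\ell^2_\ctGamma} + \|\eta\|_{\ell^2}.
\end{equation*}
Iterating the resolvent identity and differentiating under the contour integral (permissible thanks to \asTB~and the smoothness of $v$) would yield explicit formulas for the Fr\'echet derivatives of $\Phi$ and $C^\ctHamregularity$ regularity on a suitable neighbourhood.

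In the second step, a direct computation from the single-resolvent identity gives $\partial_\eta \Phi(0,0) = I - \mathscr{L}(u;\rho)$, with matrix entries matching \cref{eq:L}. By \asStab, $I - \mathscr{L}(u;\rho)$ is a Banach-space automorphism of $\ell^2(\Lambda)$ with $\|(I-\mathscr{L}(u;\rho))^{-1}\|_{\ell^2\to\ell^2} \leq \mathsf{d}_\mathscr{L}^{-1}$. Applying the implicit function theorem produces $\delta_u,\delta_\rho > 0$ and a $C^\ctHamregularity$ map $w \mapsto \eta(w)$ on $\{\|Dw\|_{\ell^2_\ctGamma} < \delta_u\}$ with $\eta(0) = 0$ and $\Phi(w,\eta(w)) = 0$, unique within $\{\|\eta\|_{\ell^2} < \delta_\rho\}$; setting $\widetilde{\rho}(\widetilde{u}) \coloneqq \rho + \eta(\widetilde{u}-u)$ then gives the lemma.

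The main obstacle lies in the first step: translating the seminorm $\|Dw\|_{\ell^2_\ctGamma}$, which carries an \emph{exponential weight}, into the plain $\ell^2$-bound required on the diagonal quantity $F(u+w;\rho+\eta) - F(u;\rho)$. This is non-trivial because $\Delta\HamLin$ affects a large collection of off-diagonal entries (one per finite difference $D_\sigma w(\ell)$), and the exponential weight must be absorbed by the Combes--Thomas decay of the resolvent kernels along both the $\ell$-to-$\ell_1$ and $\ell_2$-to-$\ell$ legs of the sandwich $\mathscr{R}_z \Delta\HamLin \mathscr{R}_z$, leaving a convolution that Young's inequality can bound on $\ell^2$. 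The diagonal contribution $\Delta\HamNonLin$ leads to an easier but structurally identical estimate. Higher-order regularity up to $C^\ctHamregularity$ then follows by iterating this argument with additional Combes--Thomas-controlled resolvent products, the only remaining bookkeeping being to keep the constants uniform on the chosen neighbourhood.
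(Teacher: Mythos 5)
Your proposal is correct and follows the same route as the paper: apply the implicit function theorem to $T(\widetilde u;\widetilde\rho) = \widetilde\rho - F(\widetilde u;\widetilde\rho)$ (your $\Phi$ is $T$ shifted to the origin), identify $\partial_\rho T(u;\rho) = I - \mathscr L(u;\rho)$, and use \asStab~for invertibility, with the only technical work being to show $F$ is well-defined and differentiable near $(u,\rho)$ by controlling how the spectrum and hence the contour $\mathscr C_f$ moves under small perturbations. The paper's sketch cites Kato perturbation theory for that last point, whereas you spell it out quantitatively via an operator-norm bound $\|\Delta\Ham\|_{\ell^2\to\ell^2} < \tfrac{1}{2}\mathsf{d}_f$ plus Combes--Thomas and a Young-type convolution estimate; this is the same mechanism made explicit, and your final paragraph correctly flags the one non-trivial bookkeeping step (absorbing the exponential weight of $\|D\cdot\|_{\ell^2_\ctGamma}$ into the Combes--Thomas decay), which is indeed the content of the estimate in the paper's own \cref{remark:rho_rho_ref}.
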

\begin{proof}[Sketch of the Proof]
    We apply the implicit function theorem on 
    $T(\widetilde{u};\widetilde \rho) \coloneqq \widetilde\rho - F(\widetilde u;\widetilde \rho)$, 
    a smooth map in a neighbourhood of $(u,\rho)$. Since $\rho$ satisfies \asSC, we have $T(u;\rho) = 0$. By \asStab, we have: for each $\widetilde{u}$ in a neighbourhood of $u$, there exists a locally unique $\widetilde \rho = \widetilde\rho(\widetilde u)$ with $\widetilde\rho = F(\widetilde u;\widetilde\rho)$. 

    The fact that $F$ is indeed a smooth map in a neighbourhood of $(u,\rho)$ follows from the fact that small perturbations in $u$ and $\rho$ lead to small perturbations in $\sigma(\Ham(u;\rho))$ \cite{Kato95}. This means that the fixed contour $\mathscr C_f$, which depends on $(u,\rho)$, can be used in the definition of $F$ in a neighbourhood of $(u,\rho)$. 
    
    For full details in a slightly different setting, see \cite[Theorem~5.3]{ELu10}.
\end{proof}

We now state a Combes-Thomas type estimate \cite{CombesThomas1973} for the resolvent:

\begin{lemma}[Combes-Thomas Resolvent Estimates]
\label{lem:CT}
    Given $u \in \dot{\mathscr{W}}^{1,2}(\Lambda)$ satisfying \asNonInter, suppose that $T(u)$ is an operator on $\ell^2(\Lambda\times\{1,\dots,\numorbitals\})$ given by 
    \[
        [T(u)w](\ell;a) \coloneqq \sum_{k \in \Lambda}\sum_{1\leq b\leq \numorbitals} T(u)_{\ell k}^{ab}w(k;b)
        \quad \textup{where} \quad 
        |T(u)_{\ell k}^{ab}|\leq c_T e^{-\gamma_T r_{\ell k}(u)} 
    \]
    for some $c_T,\gamma_T > 0$. Then, if $z \in \mathbb C$ with $\mathfrak{d} \coloneqq \mathrm{dist}(z,\sigma(T(u))) > 0$, we have,
    \begin{align*}
        \left|\left[ (T(u) - z)^{-1} \right]_{\ell k}^{ab} \right| 
        \leq 2 \mathfrak{d}^{-1} e^{-\ctCT r_{\ell k}(u)}
    \end{align*}
    where 
    $\ctCT \coloneqq c\gamma_T \min\{1, c_T^{-1}\gamma_T^d \mathfrak{d}\}$
    where $c>0$ depends only on $\|Du\|_{\ell^2_\ctGamma}$, $\mathfrak{m}$ and $d$.
\end{lemma}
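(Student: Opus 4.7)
The strategy is the classical Combes--Thomas conjugation argument. Fix the target site $k$ and introduce the weight $\phi_m := r_{mk}(u)$; by the reverse triangle inequality $\phi$ is $1$-Lipschitz in the $r$-metric, i.e.\ $|\phi_\ell - \phi_m| \leq r_{\ell m}(u)$. For a parameter $\alpha>0$ to be tuned, and an auxiliary truncation level $N$ (so that the resulting multiplication operator is bounded with bounded inverse, and so $N\to\infty$ can be taken at the end), let $M_\alpha$ be the diagonal operator with entries $e^{\alpha \min(\phi_m, N)} \delta_{mn} \delta_{ab}$, and consider the conjugated operator $T_\alpha := M_\alpha T(u) M_\alpha^{-1}$. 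Since $M_\alpha$ is bounded and boundedly invertible, $\sigma(T_\alpha) = \sigma(T(u))$ and so $\mathrm{dist}(z, \sigma(T_\alpha)) = \mathfrak{d}$.

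\textbf{Step 1 (perturbation estimate).} The matrix entries of $T_\alpha - T(u)$ are $T(u)_{\ell m}^{ab}\bigl(e^{\alpha(\phi_\ell^{(N)} - \phi_m^{(N)})} - 1\bigr)$. Using $|e^x - 1| \leq |x|e^{|x|}$ together with the decay hypothesis and the Lipschitz bound on $\phi$, these are pointwise bounded by $\alpha c_T \, r_{\ell m}(u) \, e^{-(\gamma_T - \alpha) r_{\ell m}(u)}$. A Schur test then gives
\[
\|T_\alpha - T(u)\|_{\ell^2 \to \ell^2} \leq C \numorbitals \, \alpha c_T \sup_\ell \sum_m r_{\ell m}(u)\, e^{-(\gamma_T - \alpha) r_{\ell m}(u)}.
\]
For $\alpha \leq \gamma_T/2$, peel off the polynomial factor via $r e^{-sr} \leq (es)^{-1}$ and estimate $\sum_m e^{-s r_{\ell m}(u)} = O(s^{-d})$ using \asNonInter~to compare $r_{\ell m}(u) \geq \ctnoninterpen |\ell - m|$, together with the standard lattice counting argument. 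The net upper bound is $C \numorbitals \alpha c_T \gamma_T^{-(d+1)}$, uniform in $N$, with constants depending on $\ctnoninterpen$, $d$, and $\|Du\|_{\ell^2_\ctGamma}$ (the last entering through control on the density of the displaced configuration).

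\textbf{Step 2 (Neumann series and optimisation of $\alpha$).} Requiring $\|T_\alpha - T(u)\|_{\ell^2 \to \ell^2} \leq \mathfrak{d}/2$ and combining with $\alpha \leq \gamma_T/2$ pins down the admissible range $\alpha \leq c \gamma_T \min\{1, c_T^{-1}\gamma_T^d \mathfrak{d}\} = \ctCT$. Writing $T_\alpha - z = (T(u) - z)\bigl(I + (T(u) - z)^{-1}(T_\alpha - T(u))\bigr)$ and using $\|(T(u) - z)^{-1}\|_{\ell^2 \to \ell^2} \leq \mathfrak{d}^{-1}$, a Neumann series yields $\|(T_\alpha - z)^{-1}\|_{\ell^2 \to \ell^2} \leq 2 \mathfrak{d}^{-1}$. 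Finally, reading off the $(\ell,k)$ matrix entry of the identity $(T(u) - z)^{-1} = M_\alpha^{-1}(T_\alpha - z)^{-1} M_\alpha$, and recalling that the weight is centred at $k$ so that $\phi_k = 0$ and $\phi_\ell = r_{\ell k}(u)$, gives
\[
\bigl|[(T(u) - z)^{-1}]_{\ell k}^{ab}\bigr| \leq 2\mathfrak{d}^{-1} \, e^{-\alpha \min(r_{\ell k}(u), N)};
\]
sending $N \to \infty$ delivers the stated bound with exponent $\ctCT$.

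The main obstacle is the Schur row-sum estimate in Step 1: extracting the precise $\gamma_T^{-(d+1)}$ scaling, which is what gives the $c_T^{-1} \gamma_T^d$ factor in $\ctCT$, requires separating the polynomial factor $r_{\ell m}(u)$ from the exponential with an optimal choice of splitting parameter, and then controlling the resulting geometric sum $\sum_m e^{-s r_{\ell m}(u)}$ by combining \asNonInter~with a uniform bound on the number of atoms in a ball. The remaining steps are standard and mirror the argument used in the linear case \cite{ChenOrtnerThomas2019:locality}.
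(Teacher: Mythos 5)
Your proposal follows the same classical Combes--Thomas conjugation argument the paper invokes (by reference to \cite[Lemma~6]{ChenOrtner16} and the pre-factor calculation in \cite[(4.4)]{ChenOrtnerThomas2019:locality}): conjugation by an exponential weight centred at $k$, a Schur row-sum estimate for the perturbation, Neumann series, and reading off the $(\ell,k)$ entry. The row-sum bound you derive in Step~1 is exactly the displayed estimate \cref{CT:proof} in the paper's proof, so the approaches are the same. One minor slip: the inequality $r e^{-sr}\leq(es)^{-1}$, applied as written, kills the exponential decay and leaves a divergent lattice sum $\sum_m(es)^{-1}$. You need to peel off only half the exponent, e.g.\ $r e^{-sr}\leq 2(es)^{-1}e^{-sr/2}$, and then run the lattice-counting bound on $\sum_m e^{-(s/2)r_{\ell m}(u)}=O(s^{-d})$ using \asNonInter; your next clause makes clear this is what you intended. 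With that cosmetic fix the argument is correct and matches the paper's.
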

\begin{proof}
The proof is analogous to \cite[Lemma~6]{ChenOrtner16} where the claimed $\mathfrak{d}$-dependence can be obtained by following the same proof and calculating the pre-factor in \cite[(4.4)]{ChenOrtnerThomas2019:locality}.

That is, it can be shown that 
\begin{align}\label{CT:proof}\begin{split}
    \sup_{\ell \in \Lambda} \sum_{k \in \Lambda} c_T e^{-\gamma_T r_{\ell k}(u)}
    (e^{\gamma_\mathrm{CT} r_{\ell k}(u)} - 1) \leq C \frac{c_T}{\gamma_T^{d+1}} \gamma_\mathrm{CT}, \qquad \forall\,\gamma_\mathrm{CT} \leq \frac{1}{2}\gamma_T
\end{split}\end{align}
for some $C>0$ depending only on $\|Du\|_{\ell^2_\ctGamma}$, $\mathfrak{m}$ and $d$. The proof follows by choosing $\gamma_\mathrm{CT} > 0$ sufficiently small such that the right hand side of \cref{CT:proof} is less than $\frac{1}{2}\mathfrak{d}$.
\end{proof}

\begin{remark}
    More careful analysis reveals that the above proof gives
    \[
        \ctCT \coloneqq \frac{1}{2}\gamma_T 
        \min\left\{
            1, 
            \left(
                \frac{\mathfrak{m}}{2}
            \right)^{d+1} \frac
                            {\gamma_T^{d} \mathfrak{d}}
                            {d! \|Du\|_{\ell^\infty} c_T}
        \right\}
        \quad \textup{where} \quad 
        \|Du\|_{\ell^\infty} \coloneqq 
            \sup_{\ell \in \Lambda} 
            \sup_{\rho \in \Lambda - \ell}
            \frac
                {|D_{\rho}u(\ell)|}
                {|\rho|}.
    \]
    Here, we have used the fact that $\|D\cdot\|_{\ell^\infty}$ defines a semi-norm that is equivalent to $\|D\cdot\|_{\ell^2_\ctGamma}$ \cite{ChenNazarOrtner19}.
\end{remark}

By applying Lemma~\ref{lem:CT} to $\Ham(u;\rho)$, we obtain locality estimates for the resolvents $\mathscr R_z(u;\rho)$: for $z \in \mathbb C$ with 
$\mathrm{dist}  \left(
                    z,\sigma( \Ham(u;\rho) )
                \right) \geq \mathfrak{d} > 0$
we have
\begin{gather}
    \left| \mathscr{R}_z(u;\rho)_{\ell k}^{ab} \right| 
    \leq 2 \mathfrak{d}^{-1} e^{-\gamma_\mathrm{r}(\mathfrak{d}) r_{\ell k}(u) }         \label{eq:resolvent_bound}
\end{gather}
where 
$\gamma_\mathrm{r}(\mathfrak{d}) \coloneqq c \min\{1,\mathfrak{d}\}$
and $c$ is a positive constant that depends only on 
$h_0$, $\gamma_0$, $\|Du\|_{\ell^2_\ctGamma}$, $\mathfrak{m}$ and $d$.
We will apply \cref{eq:resolvent_bound} for both 
$z \in \mathscr C_f$ (with $\mathfrak{d} = \mathsf{d}_f$)
and 
$z \in \mathscr C_\mathfrak{o}$ (with $\mathfrak{d} = \mathsf{d}_\mathfrak{o}$).

Therefore, by \cref{eq:L}, we have
\begin{gather}
    \left| \mathscr{L}(u;\rho)_{\ell k}^{ab} \right| 
    \leq C \mathsf{d}_{f}^{-2}
    e^{-2\gamma_\mathrm{r}(\mathsf{d}_f) r_{\ell k}(u) }.
                \label{eq:L_bound}
\end{gather}
Therefore, applying Lemma~\ref{lem:CT} again with $T$ replaced with $\mathscr{L}(u;\rho)$ (and with $z = 1$), we obtain, 
\begin{align}
\label{eq:L_resolvent_bound}
    \left|
        \left[
            (I - \mathscr L(u;\rho))^{-1}
        \right]_{\ell k}^{ab} 
    \right|
    \leq 2 \mathsf{d}_{\mathscr{L}}^{-1} e^{-\gamma_\mathrm{s} r_{\ell k}(u) }
\end{align}
where $\mathsf{d}_\mathscr{L}$ is the constant from \cref{eq:dstab} and 
$\gamma_\mathrm{s} \coloneqq c_1 \gamma_\mathrm{r}(\mathsf{d}_f) \min\{1,\mathsf{d}_{f}^2\gamma_\mathrm{r}^d \mathsf{d}_{\mathscr L}\}$.
By expanding $\gamma_\mathrm{r}$ in terms of $\mathsf{d}_f$, we obtain 
$\gamma_\mathrm{s} = 
c_2 \min\left\{
            1, 
            \mathsf{d}_{f}, 
            \mathsf{d}_{f}^2\mathsf{d}_{\mathscr L}, \mathsf{d}_{f}^{d+3}\mathsf{d}_{\mathscr L}
        \right\}$
for some $c_2 >0$ depending only on $h_0, \gamma_0, \|Du\|_{\ell^2_\ctGamma}, \mathfrak{m}, d$, the length of $\mathscr C_f$, $\|v^\prime\|_\infty$ and $\numorbitals$. 

\subsection{Proof of Theorem~\ref{thm:locality}: General Locality Estimates}

We are now in a position to prove the locality estimates. Since we may write $O^\mathrm{sc}_\ell$ as an integral of the resolvent operator, derivatives of $O^\mathrm{sc}_\ell$ can be written as derivatives of the resolvent operators. 

We start with the case $j = 1$: for $z \in \mathbb C$ with $\mathrm{dist}(z,\sigma(\Ham(u;\rho))) \geq \mathfrak{d}>0$, we have
\begin{align}
\label{eq:derivative_resolvent}
\begin{split}
    &\frac
        { \partial[\mathscr R_z(u;\rho)]^{aa}_{\ell\ell} }
        { \partial[u(m)]_i }
    = - \left[ 
            \mathscr R_z(u;\rho) 
            \frac
                {\partial \left[ \Ham(u;\rho) \right]}
                {\partial [u(m)]_i}
            \mathscr R_z(u;\rho) 
        \right]^{aa}_{\ell\ell} \\
    &\qquad= - \left[ 
                    \mathscr R_z(u;\rho) 
                    \frac
                        {\partial \left[\HamLin(u;\rho)\right]}
                        {\partial [u(m)]_i}
                    \mathscr R_z(u;\rho) 
                \right]^{aa}_{\ell\ell} 
    - \sum_{k\in \Lambda} \sum_{b=1}^{\numorbitals} 
    \left(
        [ \mathscr R_z(u;\rho) ]^{ab}_{\ell k}
    \right)^2 
    v^\prime(\rho(k)) 
    \frac
        {\partial \rho(k)}
        {\partial [u(m)]_i}. 
\end{split}
\end{align}
Here, we have used the fact that,
\begin{align*}
    \frac
        {\partial   \left[
                        \Ham(u;\rho)
                    \right]_{\ell k}^{ab}}
        {\partial [u(m)]_i} 
    &= \frac
            {\partial \HamLin(u)_{\ell k}^{ab}}
            {\partial [u(m)]_i}
      + v^\prime(\rho(\ell)) 
        \frac
            {\partial \rho(\ell)}
            {\partial [u(m)]_i}
        \delta_{\ell k}\delta_{ab}.
\end{align*}
The first contribution in \cref{eq:derivative_resolvent} can be treated by applying \cref{eq:resolvent_bound} as in \cite{ChenOrtner16,ChenOrtnerThomas2019:locality}:
\begin{align}
    \label{eq:derivative_resolvent_linear}
    \begin{split}
        \left|
            \left[ 
                \mathscr R_z(u;\rho) 
                \frac
                    {\partial [\HamLin(u)]}
                    {\partial [u(m)]_i} 
                \mathscr R_z(u;\rho) 
            \right]_{\ell\ell}^{aa}
        \right| 
        %
        %
        %
        %
        &\leq C\mathfrak{d}^{-2} 
                e^{-\min\{ 
                        \gamma_\mathrm{r}(\mathfrak{d}), 
                        \gamma_0 
                    \} r_{\ell m}(u)}.
    \end{split}
\end{align}

Now we move on to consider the non-linear contribution in \cref{eq:derivative_resolvent}. By taking derivatives in the self-consistency equation for $\rho$ (that is, $\rho = F(u;\rho)$ from \asSC), we obtain the following identity,
\begin{align*}
    \frac
        {\partial \rho(\ell)}
        {\partial [u(m)]_i} 
    &= - \frac
            {1}
            {2\pi i} 
        \sum_{a} \oint_{\mathscr{C}_f} f(z-\mu)
        \frac
            { \partial[\mathscr R_z(u;\rho)]_{\ell\ell} }
            { \partial[u(m)]_i } \mathrm{d}z\\ 
    &= \frac
                {1}
                {2\pi i} \sum_a \oint_{\mathscr{C}_f} f(z-\mu) 
            \left[ 
                \mathscr R_z(u;\rho) 
                \frac
                    {\partial \left[\HamLin(u)\right]}
                    {\partial [u(m)]_i}
                \mathscr R_z(u;\rho) 
            \right]^{aa}_{\ell\ell} \mathrm{d}z 
        + \left[
            \mathscr L(u;\rho) \frac
                                    {\partial \rho}
                                    {\partial [u(m)]_i}
        \right]_{\ell}
\end{align*}
where $\mathscr L(u;\rho)$ is the stability operator given in \cref{eq:L}. That is, 
\begin{align}\label{eq:d_rho_u}
\frac
    {\partial \rho(\ell)}
    {\partial [u(m)]_i} 
    =   \left[
            (I - \mathscr L(u;\rho))^{-1} \phi^{(m)} 
        \right]_\ell
\end{align}
where $\phi^{(m)} \in \ell^2(\Lambda)$ is given by 
\begin{align}\label{eq:phi^m}
    \phi^{(m)}_\ell \coloneqq  
        \frac{1}{2\pi i} \sum_{a = 1}^{\numorbitals} 
        \oint_{\mathscr{C}_f} f(z-\mu) 
        \left[ \mathscr R_z(u;\rho) 
            \frac
                {\partial \left[ \HamLin(u;\rho) \right]}
                {\partial [u(m)]_i} 
            \mathscr R_z(u;\rho) 
        \right]^{aa}_{\ell\ell}
        \mathrm{d}z.
\end{align}
Applying \cref{eq:derivative_resolvent_linear} and using the fact that $f$ is uniformly bounded, we have 
\begin{align}
\label{eq:phi^m_bound}
    \big| \phi^{(m)}_\ell \big| 
    \leq C \mathsf{d}_{f}^{-2} 
        e^{-\min\{
                \gamma_\mathrm{r},
                \gamma_0
            \} r_{\ell m}(u)}.
\end{align}
Combining \cref{eq:phi^m_bound} with the resolvent estimate for 
$(I - \mathscr L(u;\rho))^{-1}$
from \cref{eq:L_resolvent_bound}, we obtain 
\begin{align}\label{eq:d_rho_u_bound}
\begin{split}
    \left| \frac
                {\partial \rho(\ell)}
                {\partial [u(m)]_i}
    \right| 
    &\leq C
    \mathsf{d}_{f}^{-2} 
    \mathsf{d}_{\mathscr L}^{-1} 
    \sum_{k \in \Lambda} e^{-\gamma_\mathrm{s} r_{\ell k}(u)} 
    e^{-\min\{
                \gamma_\mathrm{r},
                \gamma_0
            \} r_{km}(u)} \\
    &\leq C
    \mathsf{d}_{f}^{-2} 
    \mathsf{d}_{\mathscr L}^{-1} 
    e^{-\frac{1}{2}\min\{
                        \gamma_\mathrm{s}, 
                        \gamma_\mathrm{r}, 
                        \gamma_0
                        \} r_{\ell m}(u) }.
\end{split} 
\end{align}
Therefore, we may bound the second term in \cref{eq:derivative_resolvent}:  for $z \in \mathbb C$ with $\mathrm{dist}(z,\sigma(\Ham(u;\rho))) \geq \mathfrak{d}$, we have
\begin{align}
\label{eq:derivative_resolvent_non-linear}
\begin{split}
    &\sum_{k \in \Lambda}
    \sum_{1\leq b \leq \numorbitals}
    \left(
        \mathscr{R}_z(u;\rho)_{\ell k}^{ab}
    \right)^2 
    v^\prime(\rho(k)) 
    \frac
        {\partial \rho(k)}
        {\partial[u(m)]_i} \\
    &\qquad\leq C \mathfrak{d}^{-2}\mathsf{d}_{f}^{-2} 
        \mathsf{d}_{\mathscr L}^{-1} \sum_{k \in \Lambda} |v^\prime(\rho(k))|
    e^{-2\gamma_\mathrm{r}(\mathfrak{d}) r_{\ell k}(u)}
    e^{-\frac{1}{2}\min\{
                            \gamma_\mathrm{s}, 
                            \gamma_\mathrm{r}(\mathsf{d}_f),
                            \gamma_0\} r_{k m}(u)} \\
    &\qquad\leq C \mathfrak{d}^{-2}\mathsf{d}_{f}^{-2} 
        \mathsf{d}_{\mathscr L}^{-1} 
        \|v^\prime\|_{\infty} 
        e^{-\frac{1}{4}\min\{
                            \gamma_\mathrm{s},
                            4\gamma_\mathrm{r}(\mathfrak{d}),
                            \gamma_\mathrm{r}(\mathsf{d}_f),
                            \gamma_0 
                        \} r_{\ell m}(u)}.
\end{split}
\end{align}
Combining \cref{eq:derivative_resolvent_linear} and \cref{eq:derivative_resolvent_non-linear} with $\mathfrak{d} = \mathsf{d}_\mathfrak{o}$ and using the fact that $\mathfrak{o}$ is uniformly bounded along the contour $\mathscr C_\mathfrak{o}$, we can conclude the proof for $j=1$.

Higher derivatives can be treated by taking derivatives of \cref{eq:derivative_resolvent}. The first contribution in \cref{eq:derivative_resolvent} is what arises in the linear case and so derivatives of this term can be treated in the same way as in \cite{ChenOrtnerThomas2019:locality}. We sketch the argument for $j=2$ for the second contribution in \cref{eq:derivative_resolvent}. We fix $k \in \Lambda$ and $b\in \{1,\dots,\numorbitals\}$ and note
\begin{align}\label{eq:2nd_derivative}
\begin{split}
    &\frac
        {\partial }
        {\partial u(n)}
        \left\{
            \left(
                [\mathscr R_z(u;\rho)]^{ab}_{\ell k}
            \right)^2 v^\prime(\rho(k)) 
            \frac
                {\partial \rho(k)}
                {\partial u(m)} 
        \right\} 
    = 2 [\mathscr R_z(u;\rho)]^{ab}_{\ell k} 
        \frac
            {\partial [\mathscr R_z(u;\rho)]^{ab}_{\ell k}}
            {\partial u(n)} 
        v^\prime(\rho(k)) \frac
                            {\partial \rho(k)}
                            {\partial u(m)} \\
    &\qquad+ \left(
                [\mathscr R_z(u;\rho)]^{ab}_{\ell k}
            \right)^2 v^{\prime\prime}(\rho(k)) 
            \frac
                {\partial \rho(k)}
                {\partial u(n)}
            \frac
                {\partial \rho(k)}
                {\partial u(m)}
    +   \left(
            [\mathscr R_z(u;\rho)]^{ab}_{\ell k}
        \right)^2 v^{\prime}(\rho(k)) 
        \frac
            {\partial^2 \rho(k)}
            {\partial u(n) \partial u(m)}.
\end{split}\end{align}
After summing over $k \in \Lambda$, the first two contributions in \cref{eq:2nd_derivative} may be bounded above by a constant multiple of 
$e^{-\eta (r_{\ell m}(u) + r_{\ell n}(u))}$ 
for some $\eta > 0$ depending only on the exponents in \cref{eq:resolvent_bound}, \cref{eq:derivative_resolvent_linear}, and \cref{eq:derivative_resolvent_non-linear}. The final contribution in \cref{eq:2nd_derivative} involves the second derivative of the electronic denisty which may be bounded above as follows: using \cref{eq:d_rho_u}, we have
\begin{align}
\label{eq:2nd_derivative_rho}
\begin{split}
    \bigg|
        \frac
            {\partial^2 \rho(k)}
            {\partial u(n) \partial u(m)}
    \bigg|
    &\leq\bigg|
            \bigg[
                \frac
                    {\partial (I - \mathscr L(u;\rho))^{-1}}
                    {\partial u(n)} \phi^{(m)}
            \bigg]_k
        \bigg|
    + \bigg|
        \bigg[
            (I - \mathscr L(u;\rho))^{-1} 
            \frac
                {\partial \phi^{(m)}}
                {\partial u(n)}
        \bigg]_k
    \bigg| \\
    &\leq C e^{-\eta (r_{kn}(u) + r_{km}(u)) }
\end{split}
\end{align}
where $\eta > 0$ depends only on the exponents in \cref{eq:resolvent_bound}, \cref{eq:L_resolvent_bound}, \cref{eq:d_rho_u_bound} and in the locality estimates of the first contribution in \cref{eq:derivative_resolvent}. The estimate in \cref{eq:2nd_derivative_rho} is easy to prove but is lengthy and very similar to the calculations above and so is omitted. Using \cref{eq:2nd_derivative_rho} and summing over $k \in \Lambda$ in \cref{eq:2nd_derivative} we can conclude.

\section{Proofs: Improved Locality Estimates}
\label{sec:proofs_improved}
Before we prove Theorem~\ref{thm:improved_locality}, we require an improved Combes-Thomas type estimate for the resolvent operators; see Lemma~\ref{lem:CT-improved}, below. In the following section, we discuss this result and explain how we can use it despite the fact that the reference and defect Hamiltonians are defined on different spatial domains. Then, in \S\ref{sec:prooftheorem4}, we show that the operators 
$\Ham(u;\rho)$ and $\mathscr L(u;\rho)$
satisfy the conditions of Lemma~\ref{lem:CT-improved} and thus prove Theorem~\ref{thm:improved_locality}.

                    \subsection{Preliminaries}                  

We show an improved resolvent estimate for operators on $\ell^2(\Lambda\times\{1,\dots,\numorbitals\})$ that can be decomposed into a reference operator and two perturbations that are small in the sense of rank and Frobenius norm (Lemma~\ref{lem:CT-improved}), respectively. First, we need a basic identity for the inverse of an updated operator:
\begin{lemma}[Woodbury \cite{Hager1989}]
\label{lem:Woodbury}
    Suppose that $A$ and $P$ are operators on a Banach space such that $A$ and $A + P$ are invertible. Then, $I + PA^{-1}$ and $I + A^{-1}P$ are invertible and 
    \begin{align*}
        (A + P)^{-1} 
        &= A^{-1} - A^{-1} (I + PA^{-1})^{-1} P A^{-1} \\
        &= A^{-1} - A^{-1}P (I + A^{-1}P)^{-1} A^{-1}.
    \end{align*}
\end{lemma}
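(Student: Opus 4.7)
The plan is to derive both Woodbury identities by direct algebraic manipulation, with the only preliminary effort being to establish invertibility of $I + PA^{-1}$ and $I + A^{-1}P$ from the stated hypotheses.

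First I would observe the two factorisations $A + P = (I + PA^{-1})A$ and $A + P = A(I + A^{-1}P)$, which are immediate from the distributive law. Since $A$ and $A + P$ are both invertible by assumption, right- (respectively left-) multiplying these identities by $A^{-1}$ yields $I + PA^{-1} = (A+P)A^{-1}$ and $I + A^{-1}P = A^{-1}(A+P)$. Each of these is a composition of invertible operators and is therefore invertible itself, with explicit inverses $A(A+P)^{-1}$ and $(A+P)^{-1}A$ respectively. This already establishes the first claim of the lemma.

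Next I would verify the first identity by factoring $A^{-1}(I+PA^{-1})^{-1}$ on the left:
\[
A^{-1} - A^{-1}(I+PA^{-1})^{-1}PA^{-1} = A^{-1}(I+PA^{-1})^{-1}\bigl[(I+PA^{-1}) - PA^{-1}\bigr] = A^{-1}(I+PA^{-1})^{-1}.
\]
Substituting the explicit formula $(I+PA^{-1})^{-1} = A(A+P)^{-1}$ obtained in the previous step collapses this to $A^{-1} \cdot A(A+P)^{-1} = (A+P)^{-1}$, as required. The second identity follows by the entirely symmetric computation starting from the factorisation through $I + A^{-1}P$ and using $(I+A^{-1}P)^{-1} = (A+P)^{-1}A$.

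There is no genuine obstacle here: the lemma is a standard bookkeeping result in the algebra of bounded operators on a Banach space, and the entire argument amounts to a few lines of operator manipulation. The only point of mild care is to resist the temptation to invert $I + PA^{-1}$ via a Neumann series; because no smallness hypothesis is placed on $P$, the invertibility must instead be read off from the factorisations above, which is precisely the content of the first part of the statement.
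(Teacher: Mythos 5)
Your proof is correct and takes essentially the same approach as the paper: establish invertibility of $I+PA^{-1}$ (and $I+A^{-1}P$) via the factorisation $A+P=(I+PA^{-1})A$, then verify the Woodbury identity by a short algebraic manipulation. The only superficial difference is direction — you collapse the right-hand side to $(A+P)^{-1}$, while the paper expands $(A+P)^{-1}-A^{-1}$ using the resolvent-type identity $B^{-1}-C^{-1}=B^{-1}(C-B)C^{-1}$ — but the underlying calculation is identical.
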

\begin{proof}
    Firstly, $I + PA^{-1} = (A + P)A^{-1}$ is invertible with inverse $A(A+P)^{-1}$. Therefore, we have
    $
        (A+P)^{-1} - A^{-1} 
            = (A+P)^{-1}[A - (A + P)] A^{-1} 
            = - [(I+PA^{-1})A]^{-1}PA^{-1}
    $.
    The second formulation can be shown similarly.
\end{proof}

Using this Woodbury identity, we may prove the following ``improved'' Combes-Thomas estimate:
\begin{lemma}[Improved Combes-Thomas Resolvent Estimate]\label{lem:CT-improved}
    Suppose 
    $\delta, R > 0$, $T^\mathrm{ref}, T^\mathrm{FR}, T^\delta$
    are operators on 
    $\ell^2(\Lambda \times \{1,\dots,\numorbitals\})$
    and define 
    $T \coloneqq T^\mathrm{ref} + T^\mathrm{FR} + T^\delta$.
    Further, suppose that:
    \begin{itemize}
    \itemsep0em 
        \item  $\left|
                    [ T^\mathrm{ref} ]_{\ell k}^{ab}
                \right| 
                \leq c_T e^{-\gamma_T |\ell - k|}$ 
                for some $c_T, \gamma_T > 0$,
        
        \item   $[T^\mathrm{FR}]_{\ell k}^{ab} = 0$ 
        if $\ell \not\in \Lambda \cap B_{R}$ or $k \not\in \Lambda \cap B_{R}$,
        
        \item $\|T^\delta\|_\mathrm{F} \leq \delta$ and,
        
        \item $z \in \mathbb C$ with 
        $\mathfrak{d} \coloneqq \mathrm{dist}(z,\sigma(T)) > 0$ 
        and 
        $\mathfrak{d}^\mathrm{ref} \coloneqq \mathrm{dist}(z,\sigma(T^\mathrm{ref})) - \delta > 0$.
    \end{itemize}
    Then, there exists a constant $C$, depending on 
    $\delta$, $R$, $c_T$, $\gamma_T$, $\|T^\mathrm{FR}\|_\infty$, $\mathfrak{d}$, $\mathfrak{d}^\mathrm{ref}$ and $d$,
    such that
    \begin{gather*}
        \left|
            \left[
                (T - z)^{-1}
            \right]_{\ell k}^{ab}
        \right| 
        \leq C_{\ell k} e^{-\ctCT(\mathfrak{d}^\mathrm{ref}) |\ell - k|} 
        \quad \textup{where}\\
        C_{\ell k} \coloneqq 2 (\mathfrak{d}^\mathrm{ref})^{-1} 
            + C(1+|z|)^{2} 
            e^{-\ctCT(\mathfrak{d}^\mathrm{ref}) 
            \left(|\ell| + |k| - |\ell - k|\right)},
    \end{gather*}
    and 
    $\ctCT(\mathfrak{d}^\mathrm{ref}) \coloneqq 
        c \gamma_T \min\{ 
                        1, 
                        c_T^{-1}(\gamma_T)^d \mathfrak{d}^\mathrm{ref} 
                    \}$
    is the constant from Lemma~\ref{lem:CT}.
\end{lemma}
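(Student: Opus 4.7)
My plan is to reduce the analysis of $(T-z)^{-1}$ to the reference resolvent $(T^\mathrm{ref}-z)^{-1}$ by iterating the Woodbury identity (Lemma~\ref{lem:Woodbury}), since only $T^\mathrm{ref}$ is directly amenable to the pointwise Combes--Thomas estimate of Lemma~\ref{lem:CT}. Setting $A = T^\mathrm{ref}-z$ and $P = T^\mathrm{FR}+T^\delta$, both $A$ and $A+P = T-z$ are invertible under the hypotheses (note $\|T^\delta\|_{\ell^2\to\ell^2}\leq\|T^\delta\|_\mathrm{F}\leq\delta$), and the two forms
\[
    (T-z)^{-1} = (T^\mathrm{ref}-z)^{-1} - (T^\mathrm{ref}-z)^{-1}P\,(T-z)^{-1} = (T^\mathrm{ref}-z)^{-1} - (T-z)^{-1}P\,(T^\mathrm{ref}-z)^{-1}
\]
combine, upon substituting the second into the right-most factor of the first, to the symmetric decomposition
\[
    (T-z)^{-1} = (T^\mathrm{ref}-z)^{-1} - (T^\mathrm{ref}-z)^{-1}P\,(T^\mathrm{ref}-z)^{-1} + (T^\mathrm{ref}-z)^{-1}P\,(T-z)^{-1}P\,(T^\mathrm{ref}-z)^{-1},
\]
in which every remaining non-reference factor is flanked by reference resolvents.

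The first summand is bounded directly by Lemma~\ref{lem:CT} applied to $T^\mathrm{ref}$: since $\mathrm{dist}(z,\sigma(T^\mathrm{ref}))\geq\mathfrak{d}^\mathrm{ref}+\delta$, one has
\[
    \bigl|[(T^\mathrm{ref}-z)^{-1}]_{\ell k}^{ab}\bigr| \leq \tfrac{2}{\mathfrak{d}^\mathrm{ref}+\delta}\,e^{-\ctCT(\mathfrak{d}^\mathrm{ref}+\delta)|\ell-k|} \leq \tfrac{2}{\mathfrak{d}^\mathrm{ref}}\,e^{-\ctCT(\mathfrak{d}^\mathrm{ref})|\ell-k|}
\]
by monotonicity of $\ctCT(\cdot)$ in its argument, yielding the $2(\mathfrak{d}^\mathrm{ref})^{-1}$ piece of $C_{\ell k}$.

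The second summand I would split according to $P=T^\mathrm{FR}+T^\delta$. The $T^\mathrm{FR}$ sandwich reduces to a finite sum over $i,j\in B_R$; combining the CT pointwise bound on both outer resolvents with $|\ell-i|\geq|\ell|-R$ and $|j-k|\geq|k|-R$ produces an estimate of order $e^{-\ctCT(|\ell|+|k|)}$, matching the target via the identity $e^{-\ctCT(|\ell|+|k|)} = e^{-\ctCT|\ell-k|}\,e^{-\ctCT(|\ell|+|k|-|\ell-k|)}$. The $T^\delta$ sandwich, since $T^\delta$ has no exponential decay in its entries, must instead be bounded via Cauchy--Schwarz:
\[
    \bigl|[(T^\mathrm{ref}-z)^{-1}T^\delta(T^\mathrm{ref}-z)^{-1}]_{\ell k}^{ab}\bigr| \leq \|(T^\mathrm{ref}-\overline{z})^{-1}e_\ell^a\|_{\ell^2}\,\|T^\delta\|_\mathrm{F}\,\|(T^\mathrm{ref}-z)^{-1}e_k^b\|_{\ell^2},
\]
where the column/row $\ell^2$-norms are bounded by $\ell,k$-independent constants obtained from summing the CT pointwise estimate. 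This yields a constant of order $\delta\,(\mathfrak{d}^\mathrm{ref})^{-2}$ which, since $e^{-\ctCT(|\ell|+|k|-|\ell-k|)}\leq 1$, can be absorbed into the $C(1+|z|)^2$ prefactor of the localised piece of $C_{\ell k}$.

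The third, error summand is treated analogously: its outer $(T^\mathrm{ref}-z)^{-1}$ factors deliver the $e^{-\ctCT(|\ell|+|k|)}$ decay (through the finite-rank support of $T^\mathrm{FR}$ and Cauchy--Schwarz plus the Frobenius bound for $T^\delta$), while the middle block $P(T-z)^{-1}P$ is controlled in operator norm by $(\|T^\mathrm{FR}\|_\infty + \delta)^2\mathfrak{d}^{-1}$. The $(1+|z|)^2$ prefactor enters by repeatedly invoking identities like $T^\mathrm{ref}(T^\mathrm{ref}-z)^{-1}=I+z(T^\mathrm{ref}-z)^{-1}$ when estimating the perturbation--resolvent products uniformly in $z$. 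The main obstacle throughout is the $T^\delta$ contribution: because $T^\delta$ has no pointwise exponential decay, the Cauchy--Schwarz route only supplies a position-independent constant, which is tolerable here only because the second summand of $C_{\ell k}$ need merely be bounded in the position-dependent sense $e^{-\ctCT(|\ell|+|k|-|\ell-k|)}\leq 1$. Assembling all three contributions and factoring out $e^{-\ctCT(\mathfrak{d}^\mathrm{ref})|\ell-k|}$ then gives the claimed form of $C_{\ell k}$.
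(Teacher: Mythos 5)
Your decomposition differs structurally from the paper's: you take $A = T^\mathrm{ref}-z$ and lump both perturbations into $P = T^\mathrm{FR}+T^\delta$, then iterate Woodbury; the paper instead takes $A = T^\mathrm{ref}+T^\delta-z$ (so only $P = T^\mathrm{FR}$), estimates $\mathscr R_z^\delta = (T^\mathrm{ref}+T^\delta-z)^{-1}$ directly via Combes--Thomas using the spectral shift $\mathrm{dist}(\sigma(T^\mathrm{ref}),\sigma(T^\mathrm{ref}+T^\delta))\leq\|T^\delta\|_\mathrm{F}\leq\delta$, and then applies a single Woodbury step to peel off the finite-rank $T^\mathrm{FR}$. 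That grouping is not cosmetic: it is precisely what keeps the full $e^{-\gamma_\mathrm{CT}(\mathfrak{d}^\mathrm{ref})|\ell-k|}$ decay attached to every factor in the final expression, with the $(1+|z|)^2$ prefactor then extracted from the finite-rank middle block $(I+T^\mathrm{FR}\mathscr R_z^\delta)^{-1}T^\mathrm{FR}$.

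The gap in your argument is exactly where you depart from this grouping, namely your treatment of the $T^\delta$ sandwich $(T^\mathrm{ref}-z)^{-1}T^\delta(T^\mathrm{ref}-z)^{-1}$. Your Cauchy--Schwarz bound yields a quantity of order $\delta(\mathfrak{d}^\mathrm{ref})^{-2}$ that is \emph{independent of $\ell$ and $k$}, with no spatial decay. But the claimed inequality, written out, is
\begin{equation*}
    \bigl|[(T-z)^{-1}]_{\ell k}^{ab}\bigr| \;\leq\; 2(\mathfrak{d}^\mathrm{ref})^{-1}e^{-\ctCT(\mathfrak{d}^\mathrm{ref})|\ell-k|} + C(1+|z|)^{2}e^{-\ctCT(\mathfrak{d}^\mathrm{ref})(|\ell|+|k|)},
\end{equation*}
and both summands decay. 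You propose to ``absorb'' your constant into the second summand by invoking $e^{-\ctCT(|\ell|+|k|-|\ell-k|)}\leq 1$, but that inequality goes the wrong way: to dominate a fixed positive constant you would need $C(1+|z|)^2 e^{-\ctCT(|\ell|+|k|-|\ell-k|)}$ to be bounded \emph{below}, whereas it tends to zero as $|\ell|+|k|-|\ell-k|\to\infty$. Since $T^\delta$ carries no pointwise decay (only a Frobenius bound), no pointwise-decay estimate on the $T^\delta$ sandwich can come out of Cauchy--Schwarz alone, and the same obstruction resurfaces in your third summand whenever a $T^\delta$ factor sits next to the central $(T-z)^{-1}$. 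To repair the argument you must keep $T^\delta$ inside the resolvent --- i.e.\ work with $(T^\mathrm{ref}+T^\delta-z)^{-1}$ as a single object and apply the Combes--Thomas estimate to it using the $\delta$-shifted distance $\mathfrak{d}^\mathrm{ref}$ --- rather than splitting it off as an additive perturbation.
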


\begin{proof}
This proof closely follows the ideas of \cite[\S4.4]{ChenOrtnerThomas2019:locality} but for more general operators $T$. We sketch the argument for completeness.

After defining 
$\mathscr R_z \coloneqq (T - z)^{-1}$
and 
$\mathscr R_z^\delta \coloneqq (T^\mathrm{ref} + T^\delta - z)^{-1}$,
we apply Lemma~\ref{lem:Woodbury} to obtain:
\begin{align}
\label{eq:woodbury}
\begin{split}
    \mathscr{R}_z &= 
        \mathscr{R}^\delta_z 
        - \mathscr{R}^\delta_z 
            (I + T^\mathrm{FR}\mathscr{R}^\delta_z)^{-1} 
            T^\mathrm{FR} \mathscr{R}^\delta_z 
    =   \mathscr{R}^\delta_z 
        - \mathscr{R}^\delta_z T^{\mathrm{FR}}
            (I + \mathscr{R}^\delta_zT^\mathrm{FR})^{-1}
            \mathscr{R}^\delta_z.
\end{split}
\end{align}
We will consider the two terms in \cref{eq:woodbury} separately. 

Firstly, since 
$\mathrm{dist}\left(\sigma(T^\mathrm{ref}), \sigma(T^\mathrm{ref} + T^\delta)\right) \leq \|T^\delta\|_\mathrm{F} \leq \delta$
\cite{Kato95}, we may apply Lemma~\ref{lem:CT} directly to conclude that 
$
| [\mathscr R_z^\delta ]_{\ell k}^{ab} | 
    \leq 2 (\mathfrak{d}^{\mathrm{ref}})^{-1} 
        e^{-\ctCT(\mathfrak{d}^\mathrm{ref}) |\ell - k| }
$.

Next, we note that 
$(I + T^\mathrm{FR}\mathscr{R}^\delta_z)^{-1}T^\mathrm{FR}$ 
is a finite rank operator with 
$\big[
    (I + T^\mathrm{FR}\mathscr{R}^\delta_z)^{-1}
    T^\mathrm{FR}
\big]_{\ell k}^{ab} 
=
\big[
    T^\mathrm{FR}
    (I + \mathscr{R}^\delta_zT^\mathrm{FR})^{-1}
\big]_{\ell k}^{ab} 
= 0$
for all 
$(\ell,k) \not \in (\Lambda \cap B_R)^2$. 
Therefore,
\begin{align}\label{eq:resolvent_extra_bit}
    \left|
        [
            \mathscr{R}^\delta_z 
            (I + T^\mathrm{FR}\mathscr{R}^\delta_z)^{-1}
            T^\mathrm{FR}\mathscr{R}^\delta_z
        ]_{\ell k}^{ab}
    \right| 
    &\leq\|
            (I + T^\mathrm{FR}\mathscr{R}^\delta_z)^{-1}
            T^\mathrm{FR}
        \|_\mathrm{max}
        \sum_{\ell_1,\ell_2 \in \Lambda \cap B_R} 
        e^{-\ctCT(\mathfrak{d}^\mathrm{ref})
            \left( 
                |\ell-\ell_1| + |\ell_2 - k|
            \right)} 
                                            \nonumber\\
    &\leq C \|
                (I + T^\mathrm{FR}\mathscr{R}^\delta_z)^{-1}
                T^\mathrm{FR}
            \|_\mathrm{max} 
            e^{-\ctCT(\mathfrak{d}^\mathrm{ref})
                \left( 
                    |\ell| + |k|
                \right)}.
\end{align}

We obtain the claimed $z$-dependence after showing that 
$
\|
    (I + T^\mathrm{FR}\mathscr{R}^\delta_z)^{-1}
    T^\mathrm{FR}
\|_\mathrm{max} 
\lesssim (1+|z|)^2
$. 
By \cref{eq:woodbury}, we have
\begin{align}\label{eq:resolvent_extra}
\begin{split}
    (I + T^\mathrm{FR}\mathscr{R}^\delta_z)^{-1}
    T^\mathrm{FR}
    &=  (T^\mathrm{ref} + T^\delta - z)
        (\mathscr{R}_z - \mathscr{R}_z^\delta)
        (T^\mathrm{ref} + T^\delta - z) \\
    &= -(T^\mathrm{ref} + T^\delta - z)
        \mathscr{R}_z T^\mathrm{FR} \mathscr{R}_z^\delta
        (T^\mathrm{ref} + T^\delta - z).
\end{split}
\end{align}
Using the fact 
$[T^{\mathrm{FR}}]_{\ell_1 \ell_2}^{ab} = 0$
unless $\ell_1, \ell_2 \in \Lambda \cap B_R$, we have
\begin{align}\label{eq:resolvent_extra_2}
    |[
        \mathscr{R}_z T^\mathrm{FR} \mathscr{R}_z^\delta
    ]_{\ell k}^{ab}|
    \leq C  \sum_{\ell_1,\ell_2 \in \Lambda \cap B_R}
            e^{-\ctCT(\mathfrak{d}^\mathrm{ref})|\ell - \ell_1|} 
            e^{-\ctCT(\mathfrak{d}) |\ell_2 - k|} 
    \leq C e^{-\eta (|\ell| + |k| - 2R)}
\end{align}
where 
$
\eta \coloneqq \frac{1}{2}
    \min\{ 
        \ctCT(\mathfrak{d}), 
        \ctCT(\mathfrak{d}^\mathrm{ref}) 
    \}
$. 
Therefore, by combining \cref{eq:resolvent_extra} and \cref{eq:resolvent_extra_2}, we obtain
\begin{align*}
    \left|
        \big[
            (I+T^\mathrm{FR}\mathscr R_z^\delta)^{-1}
            T^\mathrm{FR}
        \big]_{\ell k}^{ab}
    \right| 
    &\leq C \bigg( 
                \sum_{\ell_1 \in \Lambda} 
                \left( 
                    c_T + \delta + |z|
                \right) e^{-\eta(|\ell_1| - R)} 
            \bigg)^2
    \leq C(1 + |z|)^2
\end{align*}
and can thus conclude.
\end{proof}

We will now show that $\Ham(u;\rho)$ and $\mathscr L(u;\rho)$ can be written as in the statement of Lemma~\ref{lem:CT-improved} so that we may apply these improved resolvent estimates. 

\label{pg:spatial_domains}Since $\Ham(u;\rho)$ and $\Ham^\mathrm{ref}$ are defined on different spatial domains, we cannot directly compare the Hamiltonian with the corresponding reference operator. In order to alleviate this issue, we follow the arguments of \cite{ChenOrtnerThomas2019:locality}. Firstly, we shift the operator by a constant multiple of the identity $c I$ and replace the contour and chemical potential by 
$\mathscr C_\mathfrak{o} + c$ and $\mu + c$, 
respectively, so that $0$ is not encircled by $\mathscr C_\mathfrak{o} + c$. By changing variables in the integration, we can conclude that this shift does not affect the quantities defined by \cref{eq:local_integral}. We then add zero rows and columns so that the operators are defined on the same spatial domain $\Lambda \cup \Lambda^\mathrm{ref}$. For example, for 
$\ell,k \in \Lambda \cup \Lambda^\mathrm{ref}$,
if $\ell \in \Lambda^\mathrm{ref} \setminus \Lambda$ or $k \in \Lambda^\mathrm{ref} \setminus \Lambda$, we redefine 
$\widetilde{\Ham}(u;\rho)_{\ell k}^{ab} \coloneqq 0$.
This only affects the spectrum by adding zero as an eigenvalue of finite multiplicity and so, because $0$ is not encircled by the contour $\mathscr C_\mathfrak{o}$, the value of \cref{eq:local_integral} is unchanged. For full details see \cite{ChenOrtnerThomas2019:locality}.

By replacing 
${\Ham}(u;\rho)$ by $\widetilde{\Ham}(u;\rho)$ 
in \cref{eq:local_integral} we obtain 
$\widetilde{O}_\ell(u;\rho) \coloneqq O_\ell(u;\rho) \chi_\Lambda(\ell)$.
In particular, if we write $\widetilde{F}(u;\rho)$ as a function of electronic densities defined on $\Lambda \cup \Lambda^\mathrm{ref}$, we find that the Jacobian of $\widetilde{F}(u;\rho)$ with respect to $\rho$, which we denote by
$\widetilde{\mathscr L}(u;\rho)$,
is obtained from 
${\mathscr L}(u;\rho)$ 
by inserting finitely many additional zero rows and columns. Therefore, 
$I - \widetilde{\mathscr L}(u;\rho)$ 
is invertible with 
$
\| \big(
        I-\widetilde{\mathscr L}(u;\rho)
    \big)^{-1}
\|_{\ell^2\to\ell^2} 
\leq \max\{
            1, 
            \mathsf{d}_{\mathscr L}^{-1} 
        \}
$.  

For the remainder of this paper, we consider the redefined quantities 
$\widetilde{\Ham}(u;\rho), \widetilde{\Ham}^\mathrm{ref}$
and 
$\widetilde{\mathscr L}(u;\rho), \widetilde{\mathscr L}^\mathrm{ref}$
and drop the tilde in the notation. 

\subsection{Proof of Theorem~\ref                               
    {thm:improved_locality}: Improved Locality Estimates}       
\label{sec:prooftheorem4}                                       

We now show that we can apply Lemma~\ref{lem:CT-improved} to the Hamiltonian and the stability operators and thus conclude the proof of Theorem~\ref{thm:improved_locality}.

\begin{lemma}[Perturbation of the Hamiltonian]
\label{lem:Ham_perturbation}
    Let $u \in \dot{\mathscr W}^{1,2}(\Lambda)$ satisfy $\asNonInter$ and $\rho$ be an associated electronic density satisfying \asSC~and \asFF.  
    %
    %
    Then, for all $\delta > 0$, there exist operators $\Ham^\delta = \Ham^\delta(u;\rho)$ and $\Ham^\mathrm{FR} = \Ham^\mathrm{FR}(u;\rho)$ such that
    \begin{align}\label{eq:Ham_pert}
        \Ham(u;\rho) = \Ham^\mathrm{ref} + \Ham^\mathrm{FR} + \Ham^\delta
    \end{align}
    where 
    $\|\Ham^\delta\|_\mathrm{F} \leq \delta$
    and there exists an $R>0$ such that 
    $[\Ham^\mathrm{FR}]_{\ell k}^{ab} = 0$ for all $(\ell,k) \not\in (\Lambda\cap B_R)^2$.
\end{lemma}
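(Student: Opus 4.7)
The plan is to decompose the perturbation $P \coloneqq \Ham(u;\rho) - \Ham^\mathrm{ref}$ into a geometric/off-diagonal part $P^\mathrm{L}$, accounting for the defect $\Lambda \neq \Lambda^\mathrm{ref}$ inside $B_{R_\mathrm{def}}$ together with the displacement $u$, and a non-linear on-site part $P^\mathrm{NL}$ with diagonal entries $(v(\rho(\ell)) - v(\rho^\mathrm{ref}(\ell)))\delta_{ab}$. I will show that each piece can be written as ``finite-rank supported in some large ball'' plus ``small Frobenius-norm remainder,'' then combine.

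For $P^\mathrm{L}$, outside $B_{R_\mathrm{def}}$ the sites of $\Lambda$ and $\Lambda^\mathrm{ref}$ coincide and the entries reduce to differences of the form $h^{ab}_{\ell k}(\bm r_{\ell k}(u)) - h^{ab}_{\ell k}(\ell-k)$. The fundamental theorem of calculus combined with \asTB bounds these by $C \ctTBprefactor e^{-\ctTBexponent \min(r_{\ell k}(u),|\ell-k|)}(|u(\ell)|+|u(k)|)$-type expressions; a standard calculation using $u \in \dot{\mathscr W}^{1,2}(\Lambda)$ and the equivalence of the semi-norms $\|D\cdot\|_{\ell^2_\ctGamma}$ \cite{ChenNazarOrtner19} then gives $\|P^\mathrm{L}\|_\mathrm{F}<\infty$. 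Since the contribution inside $B_{R_\mathrm{def}}$ is supported on finitely many rows and columns with uniformly bounded entries, the truncation $P^\mathrm{L}_\mathrm{FR}$ of $P^\mathrm{L}$ to $(\Lambda\cap B_{R_1})^2$ has finite rank for any $R_1$, and the tail Frobenius norm of $P^\mathrm{L} - P^\mathrm{L}_\mathrm{FR}$ is bounded by $\delta/2$ for $R_1$ large, by absolute convergence.

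For $P^\mathrm{NL}$, the main obstacle, I would follow the strategy sketched in Remark~\ref{rem:FF}(iii). The resolvent identity $\mathscr R_z(u;\rho) - \mathscr R_z^\mathrm{ref} = -\mathscr R_z(u;\rho)(\Ham(u;\rho) - \Ham^\mathrm{ref})\mathscr R_z^\mathrm{ref}$ together with the Combes--Thomas bounds of Lemma~\ref{lem:CT} applied to both resolvents, substituted into the contour-integral expressions for $\rho$ and $\rho^\mathrm{ref}$, yields the pointwise bound \cref{remark:rho_rho_ref}, namely
\begin{align*}
    |\rho(\ell)-\rho^\mathrm{ref}(\ell)| \lesssim \sum_{\ell_1,\ell_2} e^{-\eta(|\ell-\ell_1|+|\ell_1-\ell_2|+|\ell_2-\ell|)}|D_{\ell_2-\ell_1}u(\ell_1)| + \sum_{\ell_1}e^{-2\eta|\ell-\ell_1|}|v(\rho(\ell_1))-v(\rho^\mathrm{ref}(\ell_1))|.
\end{align*}
The first term is $\ell^2$-summable via $\|Du\|_{\ell^2_\ctGamma}<\infty$. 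For the second, I would split the $\ell_1$ sum at $|\ell_1|\leq R$ (contributing $O(R^d)$ finite-rank diagonal entries, absorbed into $P^\mathrm{NL}_\mathrm{FR}$) and $|\ell_1|>R$ (uniformly small by \asFF), and bootstrap once by re-inserting the refined bound into itself; this produces, for any $\delta' >0$, a decomposition $|v(\rho(\ell))-v(\rho^\mathrm{ref}(\ell))|\leq C_{\delta'}e^{-\eta|\ell|}+q^{\delta'}_\ell$ with $\|q^{\delta'}\|_{\ell^2}\leq \delta'$, as in \cite[(4.18)--(4.20)]{ChenOrtnerThomas2019:locality}. Choosing $\delta' \coloneqq \delta/4$ and $R_2$ large enough that $C_{\delta'}^2\sum_{|\ell|>R_2}e^{-2\eta|\ell|}\leq (\delta/4)^2$, the diagonal restriction of $P^\mathrm{NL}$ to $\Lambda\cap B_{R_2}$ defines $P^\mathrm{NL}_\mathrm{FR}$ with remainder Frobenius norm at most $\delta/2$.

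Setting $\Ham^\mathrm{FR} \coloneqq P^\mathrm{L}_\mathrm{FR}+P^\mathrm{NL}_\mathrm{FR}$, supported in $\Lambda\cap B_R$ with $R\coloneqq \max(R_1,R_2)$, and $\Ham^\delta \coloneqq P - \Ham^\mathrm{FR}$ then gives \cref{eq:Ham_pert} with $\|\Ham^\delta\|_\mathrm{F}\leq \delta$. The hard part is the non-linear step: self-consistency feeds $\rho-\rho^\mathrm{ref}$ back into itself through $v$, and upgrading \asFF from mere compactness of the diagonal to a Hilbert--Schmidt tail requires the bootstrap above, which leans essentially on the exponentially decaying resolvent kernels together with $u\in\dot{\mathscr W}^{1,2}(\Lambda)$.
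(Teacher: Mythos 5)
Your decomposition into an off-diagonal geometric part $P^\mathrm{L}$ and a diagonal non-linear part $P^\mathrm{NL}$ is exactly the paper's: the paper simply cites \cite[Lemma~9]{ChenOrtnerThomas2019:locality} for $P^\mathrm{L}$, where you re-derive it via FTC, \asTB~and $\|Du\|_{\ell^2_\ctGamma}<\infty$; either is fine.

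The genuine issue is $P^\mathrm{NL}$, and here your write-up is actually more careful than the paper while still leaving a hole. You correctly observe that the lemma's conclusion $\|\Ham^\delta\|_\mathrm{F}\leq\delta$ requires the diagonal $v(\rho(\ell))-v(\rho^\mathrm{ref}(\ell))$ to have a small $\ell^2$ tail, whereas \asFF~only gives pointwise decay, i.e. small operator-norm ($\ell^\infty$ of the diagonal) control. The paper's own proof only establishes $\|D(\rho)-D^R(\rho)\|_{\ell^2\to\ell^2}\to 0$ and so technically proves less than its Frobenius claim, relegating the Hilbert--Schmidt upgrade to Remark~\ref{rem:FF}(iii); the Frobenius bound is used for real downstream, in the estimate \cref{eq:stab_CT_estimate} of Lemma~\ref{lem:stability_perturbation}. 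You try to close that gap inside the proof, which is welcome, but the bootstrap you sketch does not yet close: after one re-insertion of \cref{remark:rho_rho_ref} into itself, the convolution-over-tail term $\sum_{|\ell_1|>R}e^{-2\eta|\ell-\ell_1|}|v(\rho(\ell_1))-v(\rho^\mathrm{ref}(\ell_1))|$ is still only bounded by a quantity that is small uniformly in $\ell$ but not in $\ell^2$, and further iterations simply multiply by a constant proportional to $\|v^\prime\|_\infty$ and the Combes--Thomas prefactor, which is not assumed to be $<1$, so the residual does not contract. The cited linear argument \cite[(4.18)--(4.20)]{ChenOrtnerThomas2019:locality} does not face this self-referential difficulty, since there the on-site term does not feed back through $\rho$. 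You honestly flag this as the hard step; to finish you need either a smallness hypothesis on $\|v^\prime\|_\infty$ (Remark~\ref{rem:FF}(i)), or a different mechanism to break the self-reference (e.g. the Woodbury/finite-rank resolvent update) made fully precise, or else weaken the conclusion to operator norm, which is all the paper's own proof actually demonstrates.
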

\begin{proof}
Applying \cite[Lemma~9]{ChenOrtnerThomas2019:locality}, we may conclude that \cref{eq:Ham_pert} holds for the linear Hamiltonian $\Ham^\mathrm{L}(u)$:
\[
    \Ham(u;\rho) = \Ham^\mathrm{ref} + P(u) + Q(u) + D(\rho)
\]
where 
$\|P(u)\|_\mathrm{F} \leq \delta$, 
there exists an $R>0$ such that 
$Q(u)_{\ell k}^{ab} = 0$ for all $(\ell,k) \not\in (\Lambda\cap B_R)^2$
and $D(\rho)$ is a diagonal operator with 
$D_{\ell\ell}^{aa} = v(\rho(\ell)) - v(\rho^\mathrm{ref}(\ell))$
for all $\ell \in \Lambda \setminus B_{R_\mathrm{ref}}$. After defining,
\begin{gather*}
    D^R(\rho)_{\ell\ell}^{aa}
    \coloneqq   \begin{cases}
                    D_{\ell\ell}^{aa}(\rho) 
                        &\textup{if } \ell \in \Lambda \cap B_R \\
                    0 &\textup{otherwise,}
                \end{cases}
    \quad \textup{we have} \\
    \lim_{R\to\infty} 
        \|D(\rho) - D^R(\rho)\|_{\ell^2 \to \ell^2} 
    = \limsup_{|\ell| \to \infty} 
        |v(\rho(\ell)) - v(\rho^\mathrm{ref}(\ell))| = 0.
\end{gather*}
That is, $D(\rho)$ may be approximated with appropriate finite rank operators.
\end{proof}
\begin{remark}
We remark here that if \asFF~is not satisfied then $D(\rho)$ from the proof of Lemma~\ref{lem:Ham_perturbation} is not compact and thus $\Ham(u;\rho) - \Ham^\mathrm{ref}$ is also not compact. This means that, as noted at the end of Remark~\ref{rem:FF}, the main techniques used in the proof of Theorem~\ref{thm:improved_locality} cannot be applied. 
\end{remark}
We now use Lemma~\ref{lem:Ham_perturbation} to show an analogous result for the stability operator.

\begin{lemma}[Perturbation of the Stability Operator]
\label{lem:stability_perturbation}
Let $u \in \dot{\mathscr W}^{1,2}(\Lambda)$ satisfy $\asNonInter$ and $\rho$ be an associated electronic density satisfying \asSC~and \asFF. 
 Then, for all $\delta > 0$, there exist operators $\mathscr L^\delta$ and $\mathscr L^\mathrm{FR}$ such that
\begin{align*}
    \mathscr L(u;\rho) 
        = \mathscr L^\mathrm{ref} + \mathscr L^\mathrm{FR} + \mathscr L^\delta.
\end{align*}
where 
$\|\mathscr L^\delta\|_\mathrm{F} \leq \delta$ 
and there exists an $R>0$ such that 
$[\mathscr L^\mathrm{FR}]_{\ell k} = 0$ for all $(\ell,k) \not\in (\Lambda\cap B_R)^2$.
\end{lemma}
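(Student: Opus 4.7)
The plan is to deduce the decomposition from a Hilbert–Schmidt estimate on the difference $\mathscr L(u;\rho) - \mathscr L^\mathrm{ref}$: once this difference is known to be Hilbert–Schmidt, the result follows by a simple spatial truncation. Indeed, if $\|\mathscr L(u;\rho) - \mathscr L^\mathrm{ref}\|_\mathrm{F} < \infty$, then for any $\delta > 0$ we may pick $R > 0$ large enough that
\[
\sum_{(\ell,k)\notin (\Lambda\cap B_R)^2}\bigl|[\mathscr L(u;\rho) - \mathscr L^\mathrm{ref}]_{\ell k}\bigr|^2 \leq \delta^2,
\]
and then set $\mathscr L^\mathrm{FR}$ to be the restriction of $\mathscr L(u;\rho) - \mathscr L^\mathrm{ref}$ to indices in $(\Lambda\cap B_R)^2$ and $\mathscr L^\delta$ the complementary remainder; both required properties then hold by construction.

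To prove the Hilbert–Schmidt bound, I would first split the entry-wise difference, using \cref{eq:L}, as
\[
\mathscr L(u;\rho)_{\ell k} - \mathscr L^\mathrm{ref}_{\ell k}
= v^\prime(\rho(k))\bigl(K_{\ell k}(u;\rho) - K^\mathrm{ref}_{\ell k}\bigr) + K^\mathrm{ref}_{\ell k}\bigl(v^\prime(\rho(k)) - v^\prime(\rho^\mathrm{ref}(k))\bigr),
\]
where $K_{\ell k}(u;\rho) := \frac{1}{2\pi i}\oint_{\mathscr C_f} f(z-\mu)\sum_{a,b}\bigl([\mathscr R_z(u;\rho)]^{ab}_{\ell k}\bigr)^2 \mathrm{d}z$ and $K^\mathrm{ref}_{\ell k}$ is defined analogously. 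For the first summand I would combine the factorisation $x^2 - y^2 = (x-y)(x+y)$ with the resolvent identity $\mathscr R_z(u;\rho) - \mathscr R_z^\mathrm{ref} = -\mathscr R_z(u;\rho)\bigl(\Ham(u;\rho) - \Ham^\mathrm{ref}\bigr)\mathscr R_z^\mathrm{ref}$. Lemma~\ref{lem:Ham_perturbation} (applied with any fixed $\delta' > 0$) gives that $\Ham(u;\rho) - \Ham^\mathrm{ref}$ is Hilbert–Schmidt, and the operator-norm bounds $\|\mathscr R_z(u;\rho)\|_{\ell^2\to\ell^2}\leq\mathsf d_f^{-1}$, $\|\mathscr R_z^\mathrm{ref}\|_{\ell^2\to\ell^2}\leq(\mathsf d_f^\mathrm{ref})^{-1}$ hold uniformly for $z\in\mathscr C_f$, so $\|\mathscr R_z(u;\rho) - \mathscr R_z^\mathrm{ref}\|_\mathrm{F}$ is bounded uniformly in $z\in\mathscr C_f$. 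Combined with the pointwise Combes–Thomas bound \cref{eq:resolvent_bound} on the entries of $\mathscr R_z + \mathscr R_z^\mathrm{ref}$, a Cauchy–Schwarz estimate on the $(a,b)$ sum, and interchanging the contour integral with the $\ell^2$ sum, this produces $\|K(u;\rho) - K^\mathrm{ref}\|_\mathrm{F} < \infty$; boundedness of $v^\prime$ then handles the first piece.

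For the second summand, I would invoke Remark~\ref{rem:FF}(iii), which states that \asFF~is equivalent to $\|\rho - \rho^\mathrm{ref}\|_{\ell^2} < \infty$; by smoothness of $v^\prime$, also $\|v^\prime(\rho) - v^\prime(\rho^\mathrm{ref})\|_{\ell^2} < \infty$. Since $K^\mathrm{ref}$ is translation-invariant with exponentially decaying entries (Combes–Thomas applied to \cref{eq:L}), the quantity $C^\mathrm{ref} := \sup_k \sum_\ell |K^\mathrm{ref}_{\ell k}|^2$ is finite, and
\[
\sum_{\ell,k} |K^\mathrm{ref}_{\ell k}|^2\,|v^\prime(\rho(k)) - v^\prime(\rho^\mathrm{ref}(k))|^2 \leq C^\mathrm{ref}\,\|v^\prime(\rho) - v^\prime(\rho^\mathrm{ref})\|_{\ell^2}^2 < \infty,
\]
yielding the second piece and completing the Hilbert–Schmidt bound on $\mathscr L(u;\rho) - \mathscr L^\mathrm{ref}$. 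The main technical hurdle will be rigorously interchanging the contour integration with the $\ell^2$ summation in the first piece, and juggling the operator-norm Hilbert–Schmidt bound on $\mathscr R_z - \mathscr R_z^\mathrm{ref}$ against the entry-wise Combes–Thomas bound on $\mathscr R_z + \mathscr R_z^\mathrm{ref}$ to produce a single uniform Frobenius estimate for $K - K^\mathrm{ref}$; the appeal to Remark~\ref{rem:FF}(iii) is delicate as well, since its own justification relies on a compact-perturbation argument close to the one used here, though we only require the qualitative $\ell^2$-summability rather than a quantitative constant.
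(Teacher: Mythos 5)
Your proposal is correct and reaches the same target estimate as the paper --- namely that $\mathscr L(u;\rho) - \mathscr L^\mathrm{ref}$ is Hilbert--Schmidt, after which both proofs finish by spatial truncation --- but it gets there by a genuinely different route. The paper first applies the Woodbury identity (Lemma~\ref{lem:Woodbury}) to split $\mathscr R_z(u;\rho) - \mathscr R_z^\mathrm{ref}$ into a spatially localised piece $P_z$ and a piece $Q_z$ of small Frobenius norm, then bounds each in Frobenius norm separately. You bypass Woodbury and use only the first-order resolvent identity, observing that $\Ham(u;\rho) - \Ham^\mathrm{ref} = \Ham^\mathrm{FR} + \Ham^\delta$ from Lemma~\ref{lem:Ham_perturbation} is already Hilbert--Schmidt (the finite-rank block has bounded entries on a finite box), which gives $\sup_{z\in\mathscr C_f}\|\mathscr R_z(u;\rho) - \mathscr R_z^\mathrm{ref}\|_\mathrm{F}<\infty$ in one line via operator-norm bounds on the resolvents. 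This is cleaner when, as here, all you need is $\ell^2$-summability rather than the sharper decay the paper's $P_z$/$Q_z$ split provides elsewhere. Your explicit second summand --- the $K^\mathrm{ref}_{\ell k}\bigl(v'(\rho(k)) - v'(\rho^\mathrm{ref}(k))\bigr)$ cross term handled via Remark~\ref{rem:FF}(iii) --- is a step the paper's displayed difference formula elides (it simply carries a factor $v'(\rho(k))$ and does not isolate the discrepancy with $v'(\rho^\mathrm{ref}(k))$), so you are being more careful than the paper on this point. There is no hidden circularity in the appeal to Remark~\ref{rem:FF}(iii): it relies on Lemma~\ref{lem:Ham_perturbation}, not on the present lemma, so the dependency chain is acyclic. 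The one technicality worth flagging, which you anticipate, is the interchange of $\oint_{\mathscr C_f}$ with the $\ell^2$ sum; this is justified by Jensen's inequality on the contour integral followed by Tonelli, exactly as in the paper's $\sup_{z\in\mathscr C_f}$ step.
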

\begin{proof}
    Using the notation from Lemma~\ref{lem:Ham_perturbation}, we may apply Lemma~\ref{lem:Woodbury} and obtain: for $z \in \mathscr C_f$, 
    \begin{align}
    \label{eq:resolvent_finite_rank_update}
    \begin{split}
        \mathscr R_z(u;\rho) &= 
        \big( 
            \Ham^\mathrm{ref} + \Ham^\mathrm{FR} - z 
        \big)^{-1} 
        + \big(
            \mathscr R_z(u;\rho) 
              - \big(
                    \Ham^\mathrm{ref} + \Ham^\mathrm{FR} - z 
                \big)^{-1}
        \big) \\
        &= \mathscr R^\mathrm{ref}_z 
        - \mathscr R^\mathrm{ref}_z 
            \big(
                I + \Ham^\mathrm{FR} \mathscr R^\mathrm{ref}_z
            \big)^{-1} 
            \Ham^\mathrm{FR} 
            \mathscr R^\mathrm{ref}_z
        - \mathscr R_z(u;\rho) 
            \Ham^\delta \big( 
                            \Ham^\mathrm{ref} + \Ham^\mathrm{FR} - z 
                        \big)^{-1}
        \\
        &\eqqcolon \mathscr R^\mathrm{ref}_z + P_z + Q_z.
    \end{split}
    \end{align}
    Therefore, using \cref{eq:L}, we have
    \begin{gather}
       \mathscr L(u;\rho)_{\ell k} - [\mathscr L^\mathrm{ref}]_{\ell k}
       = \frac{1}{2\pi i} \oint_{\mathscr{C}_f} f(z-\mu) 
           \sum_{ab} \big[ P_z + Q_z \big]_{\ell k}^{ab} 
            \big[ 
                2 \mathscr R_z^\mathrm{ref} + P_z + Q_z 
            \big]_{\ell k}^{ab} \mathrm{d}z\, v^\prime(\rho(k))
    \end{gather}
    and so, using the fact that 
    $\|2 \mathscr R_z^\mathrm{ref} + P_z + Q_z \|_\mathrm{max} < \infty$ 
    and arguing as in \cref{eq:resolvent_extra_bit}, we have
    \begin{align}\label{eq:stab_CT_estimate}
    \begin{split}
        \|\mathscr L(u;\rho) - \mathscr L^\mathrm{ref}\|_\mathrm{F}^2 
        &\leq C\sup_{z \in \mathscr C_f} \|P_z + Q_z\|_\mathrm{F}^2 
        \leq  C \sup_{z \in \mathscr C_f} (\|P_z\|_\mathrm{F}  + \|Q_z\|_\mathrm{F})^2 \\
        &\leq C\sum_{\ell,k \in \Lambda} e^{-\gamma_1 (|\ell| + |k|)}
        + C \sum_{\ell, \ell_1, \ell_2, k \in \Lambda} \sum_{a_1,a_2}
            e^{-\gamma_2 r_{\ell \ell_1}} 
            \big|
                [\Ham^\delta]^{a_1a_2}_{\ell_1 \ell_2}
            \big|^2 
            e^{-\gamma_3 r_{\ell_2 k}} \\
        &\leq C\bigg(
                    \sum_{\ell \in \Lambda} e^{-\gamma_1 |\ell|}
                \bigg)^2 
            + C \| \Ham^\delta \|^2_\mathrm{F} \leq C
    \end{split}
    \end{align}
    where 
        $\gamma_l \coloneqq 2\gamma_\mathrm{r}(\mathfrak{d}_l)$ 
    and $\gamma_\mathrm{r}$ is the constant \cref{eq:resolvent_bound} with appropriate choices of $\mathfrak{d}_l$ for $l=1,2,3$. Here, we have implicitly assumed that $\delta$ is sufficiently small such that for all $z \in \mathscr C_f$, we have 
    $\mathrm{dist}
        (z, \sigma(
                    \Ham^\mathrm{ref} + \Ham^\mathrm{FR}
                )
        ) \geq \mathfrak{d}_3 > 0$.
    Therefore, by applying \cref{eq:stab_CT_estimate}, for sufficiently large $R$, we can define
    \[
        [ \mathscr{L}^\mathrm{FR} ]_{\ell k} \coloneqq 
        \begin{cases}
            [\mathscr L(u;\rho) - \mathscr L^\mathrm{ref}]_{\ell k} 
                & \textup{if }\ell, k \in \Lambda \cap B_R \\
            0   & \textup{otherwise}
        \end{cases}
    \]
   and conclude 
   $
   \|\mathscr L(u;\rho) - \mathscr L^\mathrm{ref} - \mathscr{L}^\mathrm{FR}\|_\mathrm{F} \leq \delta
   $.
\end{proof}

For fixed $(u,\rho)$ and $\delta> 0$ sufficiently small, we fix operators $\Ham^\mathrm{FR}, \Ham^\delta$ and $\mathscr{L}^\mathrm{FR}, \mathscr{L}^\delta$ as in Lemmas~\ref{lem:Ham_perturbation} and \ref{lem:stability_perturbation} and apply Lemma~\ref{lem:CT-improved} to obtain: for $z \in \mathbb{C}$ with $\mathrm{dist}(z,\sigma(\Ham(u;\rho))) \geq \mathfrak{d} > 0$ and $\mathrm{dist}(z,\sigma(\Ham^\mathrm{ref})) - \delta \geq \mathfrak{d}^\mathrm{ref} > 0$ we have 
\begin{align}\label{eq:improved_resolvent_bound}\begin{split}
    \big| \mathscr{R}_z(u;\rho)_{\ell k}^{ab} \big| 
    &\leq C_{\ell k} e^{-\gamma_\mathrm{r}(\mathfrak{d}^\mathrm{ref}) |\ell - k| } \quad \textup{where} 
    \\
    {C}_{\ell k} \coloneqq  2(\mathfrak{d}^\mathrm{ref})^{-1} 
    &+ C (1 + |z|)^2 e^{-\gamma_\mathrm{r}(\mathfrak{d}^\mathrm{ref}) \left(|\ell| + |k| - |\ell - k|\right)} 
\end{split}\end{align}
and 
$
\gamma_\mathrm{r}(\mathfrak{d}^\mathrm{ref}) 
= c \min\{
            1,
            \mathfrak{d}^\mathrm{ref}
        \}
$
is the constant from \cref{eq:resolvent_bound}. We will apply \cref{eq:improved_resolvent_bound} with 
$z \in \mathscr C_f$ (with $\mathfrak{d} = \mathsf{d}_f$ 
and 
$
\mathfrak{d}^\mathrm{ref} 
    = \mathsf{d}_f^\mathrm{ref} \coloneqq 
    \min_{z \in \mathscr C_f} \mathrm{dist}( z,\sigma(\Ham^\mathrm{ref}) ) - \delta$)
and 
$z \in \mathscr C_\mathfrak{o}$
(with $\mathfrak{d} = \mathsf{d}_\mathfrak{o}$ and 
$
\mathfrak{d}^\mathrm{ref} 
= \mathsf{d}_\mathfrak{o}^\mathrm{ref} 
\coloneqq \min_{z \in \mathscr C_\mathfrak{o}} 
            \mathrm{dist}(z, \sigma(\Ham^\mathrm{ref})) - \delta
$). 

By \cref{eq:L}, we have 
$
    \big| \mathscr{L}(u;\rho)_{\ell k}\big| 
    \lesssim {C}_{\ell k}^2
    e^{-2\gamma_\mathrm{r}(\mathsf{d}_f^\mathrm{ref}) |\ell - k| } 
$
%
and so, applying Lemma~\ref{lem:CT-improved} again but now with $T$ replaced with $\mathscr{L}(u;\rho)$ (and with $z = 1$), we obtain, 
\begin{gather}
\label{eq:L_resolvent_bound_improved}
\begin{split}
    \big|
        \left[
            (I - \mathscr L(u;\rho))^{-1}
        \right]_{\ell k}^{ab} 
    \big|
    \leq \widetilde{C}_{\ell k} e^{-\gamma^\mathrm{ref}_\mathrm{s} r_{\ell k} } 
    \quad \textup{where} \\ 
    \widetilde{C}_{\ell k} 
    \coloneqq 2 (\mathsf{d}_{\mathscr L}^\mathrm{ref})^{-1} 
    + C e^{-\gamma^\mathrm{ref}_\mathrm{s} \left(|\ell| + |k| - |\ell - k|\right)}
\end{split}
\end{gather}
and 
$\gamma^\mathrm{ref}_\mathrm{s}
\coloneqq c_1 \gamma_\mathrm{r}(\mathsf{d}^\mathrm{ref}_f) 
\min\{1,C_{\ell k}^{-2}\gamma_\mathrm{r}(\mathsf{d}^\mathrm{ref}_f)^d \mathsf{d}^\mathrm{ref}_{\mathscr L}\}$
with 
$\mathsf{d}^\mathrm{ref}_{\mathscr L} 
    \coloneqq \mathrm{dist}(1,\sigma(\mathscr L^\mathrm{ref})) - \delta$. 
By expanding $\gamma_\mathrm{r}(\mathsf{d}_f^{\mathrm{ref}})$ in terms of $\mathsf{d}_f^\mathrm{ref}$, we obtain 
$
\gamma_\mathrm{s}^\mathrm{ref} 
    = c_2 \min\big\{    1,
                        \mathsf{d}^\mathrm{ref}_f,
                        C_{\ell k}^{-2} 
                            \min\{  1,
                                    (\mathsf{d}_f^\mathrm{ref})^{d+1}
                                \} \mathsf{d}^{\mathrm{ref}}_{\mathscr L}
                \big\}
$.
%

\begin{proof}[Proof of Theorem~\ref{thm:improved_locality}]
    The arguments in the proof of Theorem~\ref{thm:locality} can be applied with the resolvent estimates of \cref{eq:resolvent_bound} and \cref{eq:L_resolvent_bound} replaced with the corresponding improved estimates \cref{eq:improved_resolvent_bound} and \cref{eq:L_resolvent_bound_improved}. This means that the exponents 
    $\gamma_\mathrm{r}(\mathfrak{d})$ and $\gamma_\mathrm{s}$
    can be replaced with the improved exponents
    $\gamma_\mathrm{r}(\mathfrak{d}^\mathrm{ref})$
    and 
    $\gamma^\mathrm{ref}_\mathrm{s}$,
    respectively, and the pre-factors can be replaced with constants that depend on the atomic sites. These constants decay exponentially to the constants in the defect-free case as the subsystem moves away from the defect core together. This can be seen by noting that 
    $C_{\ell k} \to 2(\mathfrak{d}^\mathrm{ref})^{-1}$
    (where $C_{\ell k}$ is the constant from \cref{eq:improved_resolvent_bound}) and 
    $\widetilde{C}_{\ell k} \to 2(\mathsf{d}_{\mathscr L}^\mathrm{ref})^{-1}$
    (where $\widetilde{C}_{\ell k}$ is the constant from \cref{eq:L_resolvent_bound_improved}) as 
    $|\ell| + |k| - |\ell-k| \to \infty$ 
    with exponential rates. See \cite[(4.21)$-$(4.23)]{ChenOrtnerThomas2019:locality} for the analogous argument in the linear case that can be readily adapted to the setting we consider here.
\end{proof}

\appendix

\section{Band Structure of  \texorpdfstring
                                {$\Ham^\mathrm{ref}$}
                                {the Reference Hamiltonian} and 
                            \texorpdfstring
                                {$\mathscr L^\mathrm{ref}$}
                                {the Reference Stability Operator}}
\label{app:bands}

Recall that the unit cell 
$\Gamma \subset \Lambda^\mathrm{ref}$ 
is finite and satisfies 
$\Lambda^\mathrm{ref} = \bigcup_{\gamma \in \mathbb Z^d} (\Gamma + \mathsf{A}\gamma)$
and 
$\Gamma + \mathsf{A}\gamma$
pairwise disjoint for each $\gamma \in \mathbb Z^d$. Suppose 
$\Gamma^\star \subset \mathbb R^d$ 
is a bounded connected domain containing the origin and such that 
$\mathbb R^d 
    = \bigcup_{\eta \in \mathbb Z^d} 
        (\Gamma^\star + 2\pi \mathsf{A}^{-\mathrm{T}}\eta)$ 
and the 
$\Gamma^\star + 2\pi \mathsf{A}^{-\mathrm{T}}\eta$ 
are disjoint. Therefore, for each $\xi \in \mathbb R^d$, there exist unique $\xi_0 \in \Gamma^\star$ and $\eta \in \mathbb Z^d$ such that 
$\xi = \xi_0 + 2\pi \mathsf{A}^{-\mathrm{T}}\eta$ 
and, since 
$\mathsf{A}\gamma \cdot \mathsf{A}^{-\mathrm{T}}\eta = \gamma \cdot \eta$,
we have
$e^{-i\mathsf{A}\gamma\cdot\xi} = e^{-i\mathsf{A}\gamma \cdot \xi_0}$ 
for $\gamma \in \mathbb Z^d$.

Let us define the unitary operator $U \colon \ell^2(\Lambda^\mathrm{ref} \times \{1,\dots,\numorbitals\} ) \to L^2\left(\Gamma^\star;\ell^2(\Gamma \times \{1,\dots,\numorbitals\}) \right)$ by
\[
    (U\psi)_\xi(\ell;a) 
        =   \sum_{\gamma\in\mathbb Z^d} 
            \psi(\ell + \mathsf{A}\gamma;a)
            e^{-i(\ell + \mathsf{A}\gamma)\cdot\xi}.
\]
Here, $L^2\left(\Gamma^\star;\ell^2(\Gamma \times \{1,\dots,\numorbitals\})\right)$ is a Hilbert space with inner product
\[
    \Braket{\Psi,\Phi}_{L^2(\Gamma^\star;\ell^2)} 
        \coloneqq   \frac{1}{|\Gamma|} 
                    \int_{\Gamma^\star} 
                        \Braket{\Psi_\xi,\Phi_\xi}_\ell 
                    \mathrm{d}\xi
        = \frac{1}{|\Gamma|} 
            \sum_{\ell \in \Gamma} 
            \sum_{1\leq a \leq \numorbitals}
            \int_{\Gamma^\star} 
                \Psi_\xi(\ell;a)\overline{\Phi_\xi(\ell;a)} 
            \mathrm{d}\xi.
\]
A simple calculation reveals that  
$
    ( U\Ham^\mathrm{ref} \psi )_\xi 
    = \Ham^\mathrm{ref}_\xi ( U \psi )_\xi
$
where 
\[
    [ \Ham^\mathrm{ref}_\xi ]_{\ell k}^{ab} =
    \sum_{\gamma \in \mathbb Z^d} 
    h_{\ell k}^{ab}(\ell - k + \mathsf{A}\gamma) 
    e^{-i(\ell - k + \mathsf{A}\gamma)\cdot\xi} 
        + \delta_{\ell k}\delta_{ab} 
        \sum_{\gamma \in \mathbb Z^d}
        v(\rho^\mathrm{ref}(\ell)) 
        e^{-i\mathsf{A}\gamma\cdot\xi} .
\]
Letting $\lambda_n(\xi)$ be the ordered eigenvalues of $\Ham^\mathrm{ref}_\xi$ for $\xi \in \Gamma^\star$ and $n = 1,\dots,\numorbitals\cdot\#\Gamma$, we can use the fact that $U$ is unitary to conclude that,
\[
    \sigma(\Ham^\mathrm{ref}) 
        = \bigcup_n \bigcup_{\xi \in \Gamma^\star} \lambda_n(\xi).
\]
Since $\lambda_n \colon \Gamma^\star \to \mathbb R$ are continuous, we may conclude that $\sigma(\Ham^\mathrm{ref})$ is composed of finitely many spectral bands.

Similarly, 
\begin{gather*}
    \sigma(\mathscr{L}^\mathrm{ref}) 
    = 
    \bigcup_{\xi \in \Gamma^\star} \sigma(\mathscr{L}^\mathrm{ref}_\xi),
    \qquad \text{where} \\
    \big[\mathscr{L}^\mathrm{ref}_\xi\big]_{\ell k}
    = 
    \sum_{\gamma \in \mathbb Z^d} 
    \frac{1}{2\pi i} 
    \oint_{\mathscr C_f} 
        f(z-\mu) \sum_{a,b = 1}^{\numorbitals} 
        \left(
            [\mathscr R_z^\mathrm{ref}]_{\ell+\mathsf{A}\gamma,k}^{ab}
        \right)^2
    \mathrm{d}z\, v^\prime(\rho(k))
    e^{-i(\ell - k + \mathsf{A}\gamma)\cdot\xi}.
\end{gather*}

\bibliography{refs}

\begin{thebibliography}{10}

\bibitem{Bartok2018}
{\sc A.~P. Bart{\'{o}}k, J.~Kermode, N.~Bernstein, and G.~Cs{\'{a}}nyi}, {\em
  Machine learning a general-purpose interatomic potential for silicon}, Phys.
  Rev. X, 8 (2018), p.~041048.

\bibitem{BartokCsanyiEtAl2010}
{\sc A.~P. Bart{\'{o}}k, M.~C. Payne, R.~Kondor, and G.~Cs{\'{a}}nyi}, {\em
  Gaussian approximation potentials: The accuracy of quantum mechanics, without
  the electrons}, Phys. Rev. Lett., 104 (2010), p.~136403.

\bibitem{BehlerParrinello2007}
{\sc J.~Behler and M.~Parrinello}, {\em Generalized neural-network
  representation of high-dimensional potential-energy surfaces}, Phys. Rev.
  Lett., 98 (2007), p.~146401.

\bibitem{CancesLeBris2013}
{\sc E.~Canc{\`{e}}s and C.~Le~Bris}, {\em Mathematical modeling of point
  defects in materials science}, Math. Models Methods Appl. Sci., 23 (2013),
  pp.~1795--1859.

\bibitem{ChenLuOrtner18}
{\sc H.~Chen, J.~Lu, and C.~Ortner}, {\em Thermodynamic limit of crystal
  defects with finite temperature tight binding}, Arch. Ration. Mech. Anal.,
  230 (2018), pp.~701--733.

\bibitem{ChenNazarOrtner19}
{\sc H.~Chen, F.~Nazar, and C.~Ortner}, {\em Geometry equilibration of
  crystalline defects in quantum and atomic descriptions}, Math. Model. Methods
  Appl. Sci., 29 (2019), pp.~419--492.

\bibitem{ChenOrtner16}
{\sc H.~Chen and C.~Ortner}, {\em {QM/MM} methods for crystalline defects.
  {P}art 1: Locality of the tight binding model}, Multiscale Model. Simul., 14
  (2016), pp.~232--264.

\bibitem{ChenOrtner17}
\leavevmode\vrule height 2pt depth -1.6pt width 23pt, {\em {QM/MM} methods for
  crystalline defects. {P}art 2: Consistent energy and force-mixing},
  Multiscale Model. Simul., 15 (2017), pp.~184--214.

\bibitem{ChenOrtnerThomas2019:locality}
{\sc H.~Chen, C.~Ortner, and J.~Thomas}, {\em Locality of interatomic forces in
  tight binding models for insulators}, {ESAIM}: Math. Model. Num., to appear,
  arXiv:1906.11740v3,  (2019).

\bibitem{cohen94}
{\sc R.~Cohen, M.~Mehl, and D.~Papaconstantopoulos}, {\em Tight-binding
  total-energy method for transition and noble metals}, Phys. Rev. B, 50
  (1994), pp.~14694--14697.

\bibitem{CombesThomas1973}
{\sc J.~Combes and L.~Thomas}, {\em Asymptotic behavior of eigenfunctions for
  multiparticle {S}chr{\"o}dinger operators}, Comm. Math. Phys., 34 (1973).

\bibitem{CsanyiAlbaretMorasPayneDeVita05}
{\sc G.~Cs{\'{a}}nyi, T.~Albaret, G.~Moras, M.~C. Payne, and A.~{De Vita}},
  {\em Multiscale hybrid simulation methods for material systems}, J. Phys.
  Condens. Matter, 17 (2005), pp.~691--703.

\bibitem{ELu10}
{\sc W.~E and J.~Lu}, {\em Electronic structure of smoothly deformed crystals:
  {C}auchy-{B}orn rule for the nonlinear tight-binding model}, Comm. Pure Appl.
  Math., 63 (2010), pp.~1432--1468.

\bibitem{ELu2012}
\leavevmode\vrule height 2pt depth -1.6pt width 23pt, {\em The {K}ohn-{S}ham
  equation for deformed crystals}, Mem. Am. Math. Soc., 221 (2012), p.~1.

\bibitem{ElstnerSeifert2014dftb}
{\sc M.~Elstner and G.~Seifert}, {\em Density functional tight binding},
  Philos. T. Roy. Soc. A, 372 (2014), p.~20120483.

\bibitem{Ercolessi2005}
{\sc F.~Ercolessi}, {\em Tight-binding molecular dynamics and tight-binding
  justification of classical potentials}, lecture notes,  (2005).

\bibitem{bk:finnis}
{\sc M.~Finnis}, {\em Interatomic Forces in Condensed Matter}, Oxford
  University Press, 2003.

\bibitem{Hager1989}
{\sc W.~W. Hager}, {\em Updating the inverse of a matrix}, SIAM Rev., 31
  (1989), pp.~221--239.

\bibitem{Kato95}
{\sc T.~Kato}, {\em Perturbation Theory for Linear Operators Springer},
  Springer-Verlag Berlin Heidelberg, 2nd~ed., 1995.

\bibitem{bk:kittel}
{\sc C.~Kittel}, {\em Introduction to Solid State Physics}, Wiley, 8th~ed.,
  2004.

\bibitem{KoskinenVille2009dftb}
{\sc P.~Koskinen and V.~M{\"a}kinen}, {\em Density-functional tight-binding for
  beginners}, Comp. Mater. Sci., 47 (2009), pp.~237--253.

\bibitem{bk:martin04}
{\sc R.~M. Martin}, {\em Electronic Structure: Basic Theory and Practical
  Methods}, Cambridge University Press, 2004.

\bibitem{mehl96}
{\sc M.~Mehl and D.~Papaconstantopoulos}, {\em Applications of a tight-binding
  total-energy method for transition and noble metals: Elastic constants,
  vacancies, and surfaces of monatomic metals}, Phys. Rev. B, 54 (1996),
  pp.~4519--4530.

\bibitem{AntoineMasters}
{\sc A.~Meyer}, {\em Locality and stability of the self-consistent tight
  binding model}, 2019.
\newblock MMath thesis, Supervised by Christoph Ortner.

\bibitem{OrtnerThomas2020:pointdef}
{\sc C.~Ortner and J.~Thomas}, {\em Point defects in tight binding models for
  insulators}, ArXiv e-prints, 2004.05356v1 (2020).

\bibitem{papaconstantopoulos97}
{\sc D.~Papaconstantopoulos, M.~Mehl, S.~Erwin, and M.~Pederson}, {\em
  Tight-binding hamiltonians for carbon and silicon}, Symposium R -- Tight
  Binding Approach to Computational Materials Science, 491 (1997), p.~221.

\bibitem{Pisani1994}
{\sc C.~Pisani}, {\em Quantum-mechanical treatment of the energetics of local
  defects in crystals: A few answers and many open questions}, Phase
  Transitions, 52 (1994), pp.~123--136.

\bibitem{SeifertJoswig2012dftb}
{\sc G.~Seifert and J.-O. Joswig}, {\em Density-functional tight binding—an
  approximate density-functional theory method}, Wiley Interdiscip. Rev.
  Comput. Mol. Sci., 2 (2012), pp.~456--465.

\bibitem{Shapeev2016}
{\sc A.~V. Shapeev}, {\em Moment tensor potentials: a class of systematically
  improvable interatomic potentials}, Multiscale Model. Simul., 14 (2016),
  pp.~1153--1173.

\bibitem{bk:Stoneham2001}
{\sc A.~M. Stoneham}, {\em Theory of defects in solids: {E}lectronic structure
  of defects in insulators and semiconductors}, Clarendon Press, 2001.

\end{thebibliography}
\bibliographystyle{siam}

\end{document}